\newtheorem{theorem}{Theorem}
\newtheorem{lemma}{Lemma}
\newtheorem{obs}[theorem]{Observation}
\theoremstyle{definition}
\newtheorem{problem}{Problem}
\newcommandx*{\bset}[1][usedefault, 1=\beta]{\mathcal{O}_{#1}}
\newcommandx*{\bhull}[1][usedefault, 1=\beta]{\mathcal{O}_{#1}\text{-hull}}
\newcommandx*{\bhullp}[2][usedefault, 1=P, 2=\beta]{\mathcal{O}_{#2}\mathcal{H}({#1})}
\newcommandx*{\cset}[2][usedefault, 1=k, 2=\beta]{\left( {#1},{#2} \right)}
\newcommandx*{\csetc}[3][usedefault, 1=k, 2=\beta, 3=\theta]{\mathcal{C}_{{#1},{#2}}({#3})}
\newcommand{\perim}{\operatorname{perim}}
\newcommand{\area}{\operatorname{area}}
\newcommandx*{\polygon}[1][usedefault, 1=\beta]{\mathcal{P}(#1)}
\newcommandx*{\triangles}[2][usedefault, 1=\beta,2=i]{\triangle_{#2}(#1)}
\newcommand*{\parallelogram}[1][]{%
  \pgfpicture\pgfsetroundjoin
    \pgftransformxslant{.6}%
    \pgfpathrectangle{\pgfpointorigin}{\pgfpoint{.60em}{.65em}}%
    \pgfusepath{stroke,#1}%
  \endpgfpicture}
\newcommandx*{\parallelograms}[2][usedefault, 1=\beta,2=j]{\parallelogram_{#2}(#1)}
\title{On the $\bhull$ of a planar point set\footnote{In memorial of professor Ferran Hurtado, inspirational friend and colleague, acknowledging his key contribution to the development of Computational Geometry.}}
\author{
  Carlos Alegr\'{i}a-Galicia
  \thanks{Posgrado en Ciencia e Ingenier\'{i}a de la Computaci\'on,
    Universidad Nacional Aut\'onoma de M\'exico, {\tt
      alegria\_c@uxmcc2.iimas.unam.mx}. Research supported by H2020-MSCA-RISE project 73499 - CONNECT.}
  \and
  David Orden
  \thanks{Departamento de F\'{i}sica y Matem\'aticas, Universidad de
    Alcal\'a, Spain, {\tt david.orden@uah.es}. Research supported by MINECO Projects
    MTM2014-54207 and TIN2014-61627-EXP, TIGRE5-CM Comunidad de Madrid Project S2013/ICE-2919, and H2020-MSCA-RISE project 73499 - CONNECT.}
  \and
  Carlos Seara
  \thanks{Departament de Matem\`atiques, Universitat
    Polit\`ecnica de Catalunya, Spain, {\tt
      carlos.seara@upc.edu}. Research supported by projects Gen. Cat. DGR 2014SGR46,
      MINECO MTM2015-63791-R, and H2020-MSCA-RISE project 73499 - CONNECT.}
  \and
  Jorge Urrutia
  \thanks{Instituto de Matem\'aticas, Universidad Nacional Aut\'onoma
    de M\'exico, {\tt urrutia@matem.unam.mx}. Research supported by SEP-CONACYT 80268, PAPPIIT IN102117 Programa de Apoyo a la Investigaci\'on e Innovaci\'on Tecnol\'ogica UNAM, and H2020-MSCA-RISE project 73499 - CONNECT.}}
\date{}
\begin{document}

\begin{abstract}
We study the $\bhull$ of a planar point set, a generalization of the
Orthogonal Convex Hull where the coordinate axes form an angle~$\beta$.
Given a set $P$ of $n$ points in the plane, we show how to maintain
the $\bhull$ of~$P$ while $\beta$ runs from $0$ to~$\pi$ in
$\Theta(n\log n)$ time and $O(n)$ space.
With the same complexity, we also find the values of~$\beta$
that maximize the area and the perimeter of the $\bhull$ and,
furthermore, we find the value of~$\beta$ achieving the best fitting
of the point set $P$ with a two-joint chain of alternate interior angle~$\beta$.
\end{abstract}

\section{Introduction}\label{sec:intro}

Let $\bset$ be a set of two lines with slopes $0$ and $\tan(\beta)$,
where $0 < \beta < \pi$. A region in the plane is said to be
\emph{$\bset$-convex}, if its intersections with all translations of
any line in $\bset$ are either empty or connected. An
\emph{$\bset$-quadrant} is a translation of one of the
($\bset$-convex) open regions that result from subtracting the lines
in $\bset$ from the plane. We call the quadrants of $\bset$
as \emph{top-right}, \emph{top left}, \emph{bottom-right}, and \emph{bottom-left}
according to their position with respect to the elements of $\bset$,
see Figure~\ref{intro:fig:bhull}(a).
Let $P$ be a set of $n$ points, and
$\mathcal{Q}$ the set of all $\bset$-quadrants that are
\emph{$P$-free}; i.e., that contain no elements of $P$. The
\emph{$\bhull$} of $P$ is the set

$$\displaystyle
\mbox{\ensuremath{\bhullp}}=\mathbb{R}^{2}-\underset{q\in\mathcal{Q}}{\bigcup}q
$$
of points in the plane belonging to all connected
supersets of $P$ which are
$\bset$-convex~\cite{alegria_2014,ottmann_1984}. See
\Cref{intro:fig:bhull}(b).

\begin{figure}[ht]
  \centering
  \subcaptionbox{\label{intro:fig:bhull:1}}
  {\includegraphics[scale=1.1]{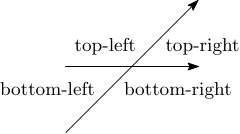}}
  \hspace{2cm}
  \subcaptionbox{\label{intro:fig:bhull:2}}
  {\includegraphics[scale=0.5]{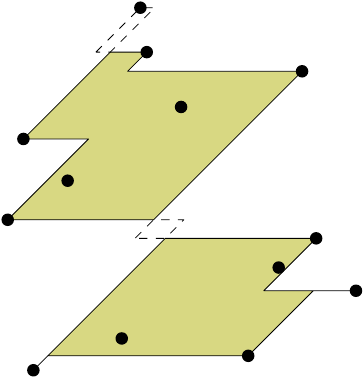}}
  \caption{(a) A set $\bhull$, the top-right, top-left, bottom right, and bottom left quadrants. (b) The corresponding $\bhull$ of a point set.}
  \label{intro:fig:bhull}
\end{figure}

The concept of $\bset$-convexity stemmed from the notion of
\emph{restricted orientations}~\cite{guting_thesis_1983}, where
geometric objects comply with a property (or a set of properties)
related to some fixed set of lines. Researchers have extensively
studied this notion by considering restricted-oriented
polygons~\cite{guting_thesis_1983}, proximity~\cite{widmayer_1987},
visibility~\cite{schuierer_thesis_1991}, and both restrictions and
generalizations of $\bset$-convexity. The particular case of
\emph{orthogonal convexity}~\cite{rawlins_1988} considers $\beta$ to
be fixed at $\frac{\pi}{2}$. In the more general
\emph{$\mathcal{O}$-convexity}~\cite{rawlins_1987,rawlins_1988},
$\bset$ is replaced by a (possibly infinite) set of lines with
arbitrary orientations. Other restricted-oriented notions of convexity
include \emph{$D$-convexity}~\cite{franek_2009} and
\emph{$\mathcal{O}$-convexity}~\cite{rawlins_thesis_1987}. The former
is based in a functional (rather than set-theoretical) definition,
while the latter (unlike $\bset$-convexity) always leads to connected
sets. For a comprehensive compilation of studies on the area please
refer to~\citet{fink_2004}. Some recent computational results can be
found in~\cite{minimum-area_2012,alegria_2013,alegria_2014,pelaez_2013}.

In this paper, we solve the problem of maintaining the combinatorial structure
of $\bhullp$ while $\beta$ goes from $0$ to $\pi$, and apply
this result to some optimization problems. Following the lines
of~\citet{bae_2009}, we find the values of $\beta$ that maximize the
area and the perimeter of $\bhullp$. In addition, we include an appendix extending the results
from~\citet{fitting_2011} to fit a two-joint not-necessarily orthogonal polygonal
chain to a point set. See \Cref{intro:fig:optim}.

\begin{figure}[ht]
  \centering
  \subcaptionbox{\label{bhull:fig:optim:1}}
  {\includegraphics[scale=0.85]{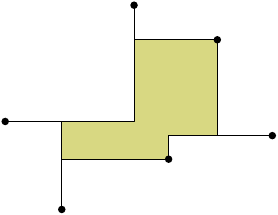}}
 \hspace{-0.2cm}
  \subcaptionbox{\label{bhull:fig:optim:2}}
  {\includegraphics[scale=0.85]{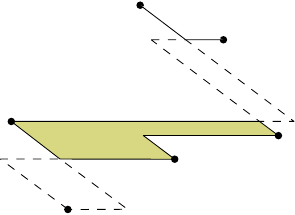}}
  \hspace{-0.4cm}
  \subcaptionbox{\label{bhull:fig:optim:3}}
  {\includegraphics[scale=0.85]{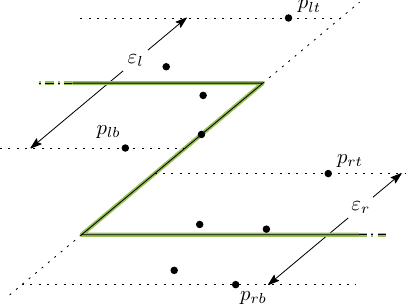}}
  \caption{\subref{bhull:fig:optim:1}
    $\bhullp[P][\frac{\pi}{2}]$. \subref{bhull:fig:optim:2}
    $\bhullp[P][\beta_0]$, where $\beta_0 > \frac{\pi}{2}$.
    \subref{bhull:fig:optim:3} A two-joint non-orthogonal polygonal
    chain fitting a point set.}
  \label{intro:fig:optim}
\end{figure}

In all cases, our general approach is to perform an angular sweep. We
first discretize the set $\{\beta:\beta \in (0,\pi)\}$ into a linear
sequence of increasing angles
$\{\beta_{1},\beta_{2},\ldots,\beta_{O(n)}\}$. While $\beta$ runs from
$0$ to $\pi$, each $\beta_i$ corresponds to an angle where there is a
change in the combinatorial structure of $\bhullp$. We then solve the
particular problem for any $\beta \in [\beta_{1}, \beta_{2})$ in
$O(n\log n)$ time, and show how to update our solution in logarithmic time in the
subsequent intervals $[\beta_{i},\beta_{i+1})$. All our algorithms run
in $O(n\log n)$ time and $O(n)$ space.

\paragraph{Outline of the paper.}

In~\Cref{sec:bhull} we show how to maintain the $\bhull$ of $P$ while
$\beta$ goes from $0$ to $\pi$. In~\Cref{sec:applications} we extend
this result to solve the optimization problems we mentioned above. We
end in~\Cref{sec:conclusions} with our concluding remarks.

\section{The $\bhull$ of $P$}\label{sec:bhull}

In this section we introduce definitions that are central to our
results. We also show how to compute $\bhullp$ for a fixed value of
$\beta$, and how to maintain its combinatorial structure while $\beta$ runs
from $0$ to $\pi$.

\subsection{Preliminaries}\label{sec:bhull:preliminaries}

For the sake of simplicity, we
will assume $P$ to have no three colinear points, and no pair of
points on a horizontal line.
Consider the region $\mathcal{R}$ obtained by removing from the plane
all top-right $\bset$-quadrants free of elements of $P$. The
\emph{top-right $\bset$-staircase} of $P$ is the directed polygonal
chain formed by the segment of the boundary of $\mathcal{R}$ that
starts at the rightmost and ends at the topmost vertex (element of $P$
that lies over the boundary) of $\bhullp$, with respect to the
coordinate system defined by the lines in $\bset$.~We further define
the \emph{top-left}, \emph{bottom-left}, and \emph{bottom-right}
$\bset$-staircases in a similar way. See \Cref{bhull:fig:staircases}.

\begin{figure}[ht]
  \centering
  \subcaptionbox{\label{bhull:fig:staircases:1}}
  {\includegraphics{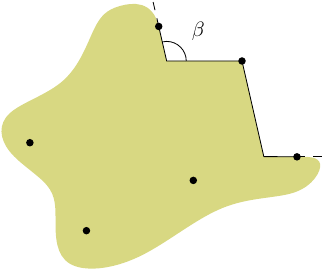}}
  \hspace{1.5cm}
  \subcaptionbox{\label{bhull:fig:staircases:2}}
  {\includegraphics{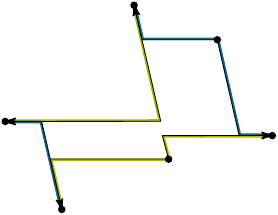}}
  \caption{\subref{bhull:fig:staircases:1} Construction of the
    top-right $\bset$-staircase. \subref{bhull:fig:staircases:2} The
    four $\bset$-staircases of $P$.}
  \label{bhull:fig:staircases}
\end{figure}

\begin{obs}\label{bhull:obs:maximal}
  A point in $P$ is a vertex of $\bhullp$ if, and only if, it is the
  apex of at least one $P$-free $\bset$-quadrant free of elements of
  $P$. Conversely, a point in the plane lies in the interior of
  $\bhullp$ if, and only if, every $\bset$-quadrant with apex on it
  contains at least one point in $P$.
\end{obs}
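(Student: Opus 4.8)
The plan is to prove the second (interior) characterization first and then read off the first (vertex) statement as an immediate consequence. First I would fix an affine change of coordinates sending the slope-$0$ line of $\bset$ to a horizontal axis and the slope-$\tan(\beta)$ line to a vertical one; under this map every $\bset$-quadrant becomes an ordinary axis-parallel quadrant, and each point $x$ is the apex of exactly four of them, which I denote $\mathrm{TR}(x),\mathrm{TL}(x),\mathrm{BR}(x),\mathrm{BL}(x)$. Two facts are then worth recording at once: the complement of $\bhullp$ is a union of open quadrants, hence open, so $\bhullp$ is closed and its interior and boundary are well defined; and every $p\in P$ lies in $\bhullp$, since any quadrant containing a point of $P$ is not $P$-free and so does not occur in the union defining the complement.

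For the interior characterization, fix a point $x$. For one direction I argue the contrapositive: if some quadrant with apex $x$ --- say $\mathrm{TR}(x)$ --- is $P$-free, then $\mathrm{TR}(x)$ lies entirely in the complement of $\bhullp$, and as $x$ is a boundary point of the open set $\mathrm{TR}(x)$, every neighbourhood of $x$ meets this $P$-free quadrant; hence $x$ is not interior. For the converse, suppose each of the four quadrants with apex $x$ contains a point of $P$, and fix witnesses $p^{\mathrm{tr}}\in\mathrm{TR}(x)$, $p^{\mathrm{tl}}\in\mathrm{TL}(x)$, and so on. I would take a box $N$ centred at $x$ small enough that in each coordinate it is separated by a positive margin from all four witnesses. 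The crux is a monotonicity property of quadrants: if a $P$-free quadrant $q$ met $N$ --- say $q=\mathrm{TR}(a)$ contains some $y\in N$ --- then its apex $a$ lies strictly below and to the left of $y$, hence strictly below and to the left of $x$ up to the chosen margin, and that margin forces $p^{\mathrm{tr}}$ to lie above and to the right of $a$ as well, i.e. $p^{\mathrm{tr}}\in q$, contradicting that $q$ is $P$-free. Running the same trapping argument for each of the four quadrant types shows $N\subseteq\bhullp$, so $x$ is interior.

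Finally I would deduce the vertex statement, recalling that a vertex of $\bhullp$ is by definition a point of $P$ lying on its boundary. For $p\in P$ we already know $p\in\bhullp$, so $p$ is a vertex if and only if $p\notin\operatorname{int}(\bhullp)$; by the interior characterization just proved this fails exactly when at least one of the four quadrants with apex $p$ is free of points of $P$, that is, when $p$ is the apex of a $P$-free quadrant, which is the first assertion. The only genuine obstacle I anticipate is the converse of the interior characterization: choosing the margin of $N$ precisely so that the monotonicity of quadrant containment really does trap one of the four witnesses inside any $P$-free quadrant that intrudes on $N$. Everything else follows directly from the definitions and the openness of the quadrants.
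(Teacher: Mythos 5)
Your proof is correct: the paper states this as an unproven Observation, regarded as immediate from the definition of $\bhullp$ as the complement of the union of $P$-free open quadrants, and your argument is exactly the natural formalization of that definitional reasoning (closedness of $\bhullp$, the contrapositive via openness of a $P$-free quadrant at its apex, and the margin/trapping argument showing a witness point is captured by any intruding $P$-free quadrant). The one delicate step you flag---choosing the box $N$ so that monotonicity of quadrant containment forces a witness into any $P$-free quadrant meeting $N$---is handled correctly, since the witnesses lie in \emph{open} quadrants and thus admit a strictly positive margin in each coordinate.
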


We say that an $\bset$-quadrant is \emph{maximal} if its boundary
joins two consecutive elements in the sequence of vertices found
while traversing an $\bset$-staircase in its corresponding
direction. Two $\bset$-quadrants are \emph{opposite} to each other if,
after placing their apices over a common point, their rays bound
opposite angles. Similarly, we say that two $\bset$-staircases are
opposite to each other, if they were constructed using opposite
$\bset$-quadrants. It is easy to see that $\bhullp$ is disconnected
when the intersection of two opposite maximal $\bset$-quadrants is not
empty. In such case we say that both $\bset$-quadrants \emph{overlap},
and refer to their intersection as an \emph{overlapping region}. See
the regions bounded by dashed lines in
\Cref{intro:fig:bhull:2,bhull:fig:optim:2}.

\begin{obs}\label{bhull:obs:staircases}
  Non-opposite $\bset$-staircases cannot generate overlapping
  regions. Moreover, only one pair of $\bset$-staircases can intersect
  at the same time.
\end{obs}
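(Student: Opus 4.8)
The plan is to work in the skewed coordinate system $(u,v)$ whose axes are parallel to the two lines of $\bset$, so that the four quadrant types become the four axis-parallel quadrants: top-right $=\{u>u_0,\,v>v_0\}$, top-left $=\{u<u_0,\,v>v_0\}$, bottom-left $=\{u<u_0,\,v<v_0\}$, and bottom-right $=\{u>u_0,\,v<v_0\}$, each with apex $(u_0,v_0)$. In these coordinates a $\bset$-quadrant is $P$-free precisely when its open region contains no point of $P$, and the apex of a \emph{maximal} quadrant sits at a reflex corner of the corresponding staircase, its two bounding rays passing through the two consecutive staircase vertices. I will lean on one consequence of general position: since no two points share a horizontal line, all $v$-coordinates are distinct, so the topmost vertex $T$ and the bottommost vertex $B$ of $\bhullp$ are unique.

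For the first statement I would simply intersect the two quadrant types of each non-opposite pair. For instance a top-right quadrant $\{u>u_0,v>v_0\}$ and a top-left quadrant $\{u<u_1,v>v_1\}$ meet (when $u_0<u_1$) in $\{u_0<u<u_1,\;v>\max(v_0,v_1)\}$, an unbounded upward wedge; the other three non-opposite combinations give, symmetrically, strips unbounded in the $+u$, $-u$, or $-v$ direction. Each such intersection escapes to infinity along one axis, hence lies in the exterior of $\bhullp$ and cannot pinch it, whereas two \emph{opposite} quadrants meet in the bounded rectangle $\{u_0<u<u_1,\,v_0<v<v_1\}$, which is exactly what can disconnect the hull. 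Combined with the fact that two non-opposite (cyclically adjacent) staircases share an extreme vertex ($T$, $B$, or a side extreme) and so trace a single connected boundary arc, this rules out any overlapping region from them.

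For the second statement I would argue by contradiction, assuming both opposite pairs overlap at once. Name the four offending $P$-free maximal quadrants by their apices: top-right at $a$ and bottom-left at $b$ with overlap inequalities $u_a<u_b$, $v_a<v_b$; top-left at $c$ and bottom-right at $d$ with $u_d<u_c$, $v_c<v_d$. The $v$-coordinate of each apex equals the $v$-coordinate of one of its two defining staircase vertices, none of which is the global extreme $T$ or $B$; hence $v_a,v_c<v_T$ and $v_b,v_d>v_B$. Since $T\in P$ cannot lie in the $P$-free top-right or top-left quadrant, these strict $v$-comparisons force $u_c\le u_T\le u_a$; symmetrically, $B\in P$ avoiding the bottom-left and bottom-right quadrants forces $u_b\le u_B\le u_d$. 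Chaining with the two overlap inequalities yields
\[
u_c \le u_a < u_b \le u_d < u_c,
\]
a contradiction, so at most one opposite pair overlaps at a time.

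The main obstacle is bookkeeping rather than depth: one must pin down, for each of the four quadrant types, which defining vertex supplies the apex's $u$-coordinate and which supplies its $v$-coordinate, and verify that $T$ and $B$ are never the vertex fixing the critical $v$-coordinate. Routing every comparison that \emph{activates} a $P$-freeness constraint through the $v$-coordinates—which are all distinct by general position—is what keeps the chain strict where it matters; attempting the same through the $u$-coordinates would hit genuine ties whenever two points lie on a common line of slope $\tan\beta$.
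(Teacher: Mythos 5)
Your proposal is correct, and it in fact supplies more than the paper does: the paper states this as an \emph{Observation} with no proof at all (the disconnection criterion just before it is dispatched with ``it is easy to see''), so your argument is a genuine addition rather than a variant of an existing one. The second half, which carries the real content, checks out in every detail: in the skewed coordinates the apex of a maximal top-right or top-left quadrant inherits its $v$-coordinate from the lower vertex of its defining pair, and on those two staircases the topmost vertex $T$ can only occur as the upper element of a consecutive pair, so indeed $v_a,v_c<v_T$, and symmetrically $v_b,v_d>v_B$; since $v=y/\sin\beta$ is a positive multiple of $y$, the paper's no-two-points-on-a-horizontal-line assumption makes all $v$-coordinates distinct, the exclusions of $T$ from the two $P$-free top quadrants and of $B$ from the two bottom ones then force $u_c\le u_T\le u_a$ and $u_b\le u_B\le u_d$, and combined with the overlap inequalities $u_a<u_b$ and $u_d<u_c$ your cycle $u_c\le u_a<u_b\le u_d<u_c$ is contradictory. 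Your decision to route every strict comparison through $v$ rather than $u$ is exactly the right precaution, since $u$-ties occur precisely at event angles. The one soft spot is in the first sentence: ``escapes to infinity along one axis, hence lies in the exterior of $\bhullp$ and cannot pinch it'' is heuristic --- \emph{every} $P$-free quadrant lies in the exterior by definition, and removing an unbounded set can in general disconnect the plane, so unboundedness alone is not the operative property. The clean version of your own computation is a complement-connectivity dichotomy: for a non-opposite pair, say $Q_{tr}=\{u>u_0,\,v>v_0\}$ and $Q_{tl}=\{u<u_1,\,v>v_1\}$ with $u_0<u_1$, the closed complement of $Q_{tr}\cup Q_{tl}$ is $(\{u\le u_0\}\cup\{v\le v_0\})\cap(\{u\ge u_1\}\cup\{v\le v_1\})$, each of whose nonempty pieces meets the halfplane $\{v\le\min(v_0,v_1)\}$, so this complement is connected and no overlapping region can arise from the pair; for an opposite overlapping pair the same computation leaves exactly two disjoint closed quadrants, which is precisely the disconnection mechanism the paper attributes to opposite staircases. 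With that one sentence firmed up, your proof is complete.
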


We will specify $\bhullp$ in terms of its vertices and its overlapping
regions. From \Cref{bhull:obs:maximal}, the set of vertices of
$\bhullp$ is the set of maximal elements of $P$ under vector
dominance~\cite{theta-maxima_1999}. Thus they can be computed for a
fixed value of $\beta$ in $\Theta(n \log n)$ time and $O(n)$
space~\cite{kung_1975,preparata_1985}. Note that $\bset$-staircases
are monotone with respect to both lines in $\bset$ (they could not
bound $\bset$-convex regions otherwise), so any pair of them intersect
with each other at most a linear number of times. From
\Cref{bhull:obs:staircases}, in a fixed value of~$\beta$ there is at
most a linear number of overlapping regions. Thus, if the vertices of
$\bhullp$ are sorted according to either the $x$- or the $y$-axis, we
can compute from them the set of overlapping regions in linear
time. We get then the following theorem where the $\Omega(n\log n)$
time lower bound comes from the fact that from $\bhullp$ we can compute
the convex hull of $P$ in linear time.

\begin{theorem}\label{intro:thm:fixed_computation}
  For a fixed value of $\beta$, the sets of vertices and overlapping
  regions of $\bhullp$ can be computed in $\Theta(n \log n)$ time and
  $O(n)$ space.
\end{theorem}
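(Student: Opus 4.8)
The plan is to establish the two matching bounds separately. For the upper bound I would exhibit an algorithm running in $O(n\log n)$ time and $O(n)$ space, and for the lower bound I would reduce from the convex hull problem. Most of the required machinery is already available from the discussion preceding the statement, so the argument is largely a matter of assembling the cited maxima algorithm with the two observations, the only genuinely new ingredient being the linear-time step that turns sorted staircase vertices into overlapping regions.

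For the vertices, recall from \Cref{bhull:obs:maximal} that a point of $P$ is a vertex of $\bhullp$ precisely when it is the apex of some $P$-free $\bset$-quadrant, that is, when it is maximal under the vector dominance induced by one of the four quadrant orientations. Applying a shear that carries the two lines of $\bset$ onto a pair of orthogonal axes turns each of the four families of maximal points into an ordinary planar maxima problem, solvable in $\Theta(n\log n)$ time and $O(n)$ space \cite{kung_1975,preparata_1985}. Each such computation outputs one $\bset$-staircase with its vertices already sorted along the corresponding axis, and the four staircases together yield the entire vertex set of $\bhullp$.

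To obtain the overlapping regions I would exploit \Cref{bhull:obs:staircases}: overlaps arise only between a single pair of opposite $\bset$-staircases, and since staircases are monotone with respect to both lines of $\bset$ their number is linear. With the vertices of the two opposite staircases already sorted along a common axis, a single simultaneous merge-style traversal of the two chains suffices to report every maximal interval on which the opposite quadrants overlap, in $O(n)$ time and $O(n)$ space. This keeps the whole construction within $\Theta(n\log n)$ time and $O(n)$ space. The merge step is the part I expect to require the most care: one must verify that the sweep detects each overlapping region exactly once and that monotonicity indeed caps their total number linearly, so that this phase never dominates the cost.

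For the lower bound I would reduce from convex hull. Every extreme point of $P$ is maximal for some quadrant orientation and is therefore a vertex of $\bhullp$; conversely, once the combinatorial structure of $\bhullp$ is known with its vertices ordered along the staircases, the convex hull of $P$ can be recovered in linear time by a Graham-scan-type pass requiring no further sorting. As computing the convex hull of $n$ planar points takes $\Omega(n\log n)$ time in the algebraic computation tree model, the same bound transfers to $\bhullp$, which together with the upper bound gives the claimed $\Theta(n\log n)$.
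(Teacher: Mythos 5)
Your proposal is correct and follows essentially the same route as the paper's own argument: vertices via \Cref{bhull:obs:maximal} reduced to the classical planar maxima problem in $\Theta(n\log n)$ time and $O(n)$ space, overlapping regions via \Cref{bhull:obs:staircases} plus staircase monotonicity and a linear-time pass over the sorted staircase vertices, and the $\Omega(n\log n)$ lower bound by recovering the convex hull of $P$ from $\bhullp$ in linear time. The shear transformation and the Graham-scan-style extraction you make explicit are implementation details the paper leaves implicit, not a different approach.
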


\subsection{The angular sweep}\label{sec:bhull:sweep}

The $\bhull$ of $P$ is shown in \Cref{bhull:fig:initial_config} at the
\emph{initial increasing configuration}, that is, where $\beta$ is
equal to an angle $\beta_I = 0 + \varepsilon$ for a small enough
$\varepsilon$. Note that every point in $P$ is the apex of a $P$-free
$\bset$-quadrant, and is thus contained in at least one
$\bset$-staircase: both top-right and bottom-left $\bset$-staircases
contain the whole set $P$, and the top-left and bottom-right
$\bset$-staircases are formed respectively, by the topmost and
bottom-most points in $P$. Also, the intersection between the
top-right and bottom-left $\bset$-staircases generate a linear number
of overlapping regions.

\begin{figure}[ht]
  \centering
  \begin{minipage}{0.9\textwidth}
    \centering
    \includegraphics[clip=true,trim=3cm 0 3cm 0]{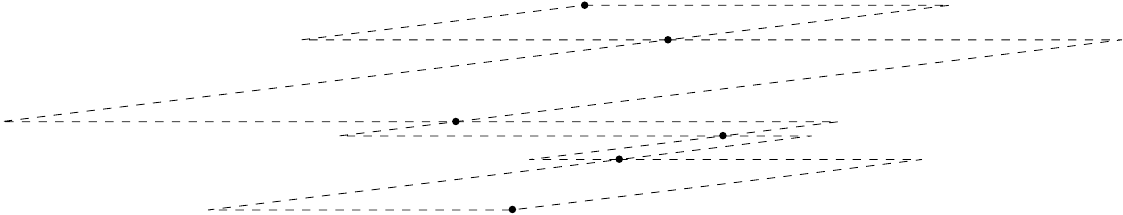}
    \caption{The initial increasing configuration.}
    \label{bhull:fig:initial_config}
  \end{minipage}
\end{figure}

By performing an \emph{increasing sweep} (where $\beta$ goes from $0$
to $\pi$), the initial increasing configuration is gradually
transformed to the \emph{initial decreasing configuration}, where
$\beta$ is equal to a value $\beta_D = \pi - \varepsilon$ for a small
enough $\varepsilon$ (see \Cref{bhull:fig:final_config}). At this
configuration, the top-left and bottom-right $\bset$-staircases
contain $P$ and generate a linear number of overlapping regions, and
the top-right and bottom-left $\bset$-staircases contain respectively,
the topmost and bottom-most points in $P$. Clearly, the converse of
the above discussion holds: from the initial decreasing configuration,
a \emph{decreasing sweep} (where $\beta$ goes from $\pi$ to $0$) will
gradually transform $\bhullp[P][\beta_D]$ into $\bhullp[P][\beta_I]$.

\begin{figure}[ht]
  \centering
  \begin{minipage}{0.9\textwidth}
    \centering
    \includegraphics[clip=true,trim=4cm 0 4cm 0]{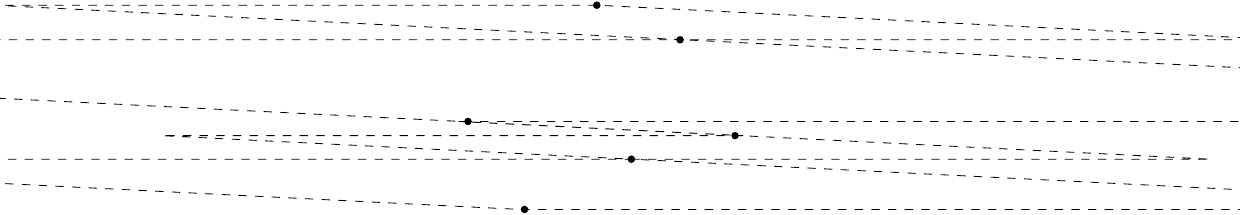}
    \caption{The initial decreasing configuration.}
    \label{bhull:fig:final_config}
  \end{minipage}
\end{figure}

During the transition between initial configurations, we recognize
four types of events that modify the set of vertices and overlapping
regions of $\bhullp$. An \emph{insertion} (resp. \emph{deletion})
event occurs when a vertex joins (resp. leaves) a
$\bset$-staircase. At \emph{overlap} (resp. \emph{release}) events, an
overlapping region is created (resp. destroyed).

Note that a vertex leaves (resp. joins) the same $\bset$-staircase at
most once, and thus, there is in total a linear number of insertion
(resp. deletion) events.~From Observation \ref{bhull:obs:staircases},
between $\beta_I$ and $\beta_D$ there is always an interval
$\phi = \left[ \beta_{1}, \beta_{2} \right]$ such that, for any
$\beta \in \phi$, the $\bhull$ of $P$ contains no overlapping
regions. Let us consider the angular intervals
$\phi_I = [\beta_I,\beta_{N_1}]$ and $\phi_D = [\beta_{N_2},\beta_D]$.
An angular sweep in $\phi_I$ results in a linear number of releasing
events caused by the deletion of all overlapping regions present at
the initial increasing configuration. As any vertex supports at most
two maximal $\bset$-quadrants, an additional linear number of region
events are generated by vertex events and, therefore, region events in
$\phi_I$ add up to $O(n)$. Using the same argument on $\phi_D$, we can
count a linear number of these events during an angular sweep.

\begin{lemma}\label{bhull:lemma:linear_events}
  There are $O(n)$ events during an angular sweep.
\end{lemma}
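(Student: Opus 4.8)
The plan is to bound two families of events separately, the \emph{vertex events} (insertions and deletions) and the \emph{region events} (overlaps and releases), and to show that each contributes $O(n)$.

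For the vertex events I would make precise the ``at most once'' statement used just above. By \Cref{bhull:obs:maximal}, a point $p\in P$ is a vertex of a given staircase exactly when the $\bset$-quadrant of the matching type with apex at $p$ is $P$-free. Seen from $p$, the four quadrants are the direction sectors $(0,\beta)$, $(\beta,\pi)$, $(\pi,\pi+\beta)$ and $(\pi+\beta,2\pi)$; as $\beta$ grows the top-right and bottom-left sectors widen while the top-left and bottom-right sectors narrow, each family being nested. Hence the $P$-freeness of each quadrant is monotone in $\beta$ over the whole sweep: a point can only \emph{leave} the top-right and bottom-left staircases, and can only \emph{join} the top-left and bottom-right ones, each at most once. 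This gives at most $4n=O(n)$ insertion and deletion events.

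For the region events I would first invoke \Cref{bhull:obs:staircases} to obtain the overlap-free middle interval $\phi=[\beta_1,\beta_2]$, splitting the sweep into $\phi_I=[\beta_I,\beta_{N_1}]$ and $\phi_D=[\beta_{N_2},\beta_D]$; by symmetry it suffices to treat $\phi_I$, where the sole overlapping pair is top-right/bottom-left. I classify each region event as \emph{vertex-induced} (it occurs at a vertex event) or \emph{motion-induced} (it occurs while the two staircase vertex sequences are fixed). Since each vertex is the apex of at most two maximal $\bset$-quadrants, the $O(n)$ vertex events produce only $O(n)$ vertex-induced region events. The motion-induced events I would control through the concrete heart of the argument: \emph{for a fixed quadruple of staircase vertices defining an opposite pair of maximal quadrants, the pair overlaps on a single interval of $\beta$.} To prove it, realize the top-right apex as the meeting of the horizontal ray through the lower step vertex and the $\beta$-sloped ray through the upper one, giving $a=\bigl(x_{i+1}-\Delta^{TR}\cot\beta,\ y_i\bigr)$, and symmetrically the opposite bottom-left apex $b=\bigl(x_{j+1}+\Delta^{BL}\cot\beta,\ y_j\bigr)$, with $\Delta^{TR},\Delta^{BL}>0$ the vertical heights of the two steps. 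The quadrants overlap exactly when the direction of $b-a$ lies in the sector $(0,\beta)$. Writing $b-a=\bigl(C_1+C_2\cot\beta,\ C_3\bigr)$ with $C_2=\Delta^{TR}+\Delta^{BL}>0$, the condition that its angle exceeds $0$ is just $C_3>0$, while the condition that its angle is below $\beta$ reduces (using $\cot\beta\cdot\sin\beta=\cos\beta$) to
\[
  C_1\sin\beta+(C_2-C_3)\cos\beta>0 .
\]
The left-hand side is a single sinusoid in $\beta$, hence positive on a single subinterval of $(0,\pi)$; this establishes the lemma, and thus at most two motion transitions per quadruple.

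The step I expect to be the main obstacle is aggregating these motion-induced transitions into an $O(n)$ bound, since a priori there are quadratically many opposite pairs. What I would really prove is that, between vertex events, the number $R(\beta)$ of overlapping regions is monotone non-increasing as $\beta$ sweeps $\phi_I$; equivalently, that increasing $\beta$ never opens a new top-right/bottom-left overlap except at a vertex event. Granting this, every motion transition in $\phi_I$ is a release destroying a region present at $\beta_I$ or created by an earlier vertex event, so the motion-induced events number at most $O(n)+O(n)=O(n)$; the symmetric argument on $\phi_D$ adds another $O(n)$, and with the $O(n)$ vertex events the lemma follows. The monotonicity of $R$ is the delicate point, and I would derive it from the single-interval lemma: for the pairs already overlapping at $\beta_I$ the overlap window has its left endpoint $\le\beta_I$, so only right endpoints (releases) are met inside $\phi_I$, and I would use \Cref{bhull:obs:staircases} together with the monotone vertex orders of the two staircases to rule out that any window's left endpoint falls strictly inside $\phi_I$ without a vertex event resetting it.
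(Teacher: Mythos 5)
Your proposal follows the same skeleton as the paper's proof: the same split into vertex events and region events, the same appeal to \Cref{bhull:obs:staircases} to isolate an overlap-free interval that splits the sweep into $\phi_I$ and $\phi_D$, and the same charging scheme for region events (releases of the regions present at $\beta_I$, plus $O(1)$ region events per vertex event because each vertex supports at most two maximal $\bset$-quadrants). Where you genuinely diverge is in making rigorous two points the paper only asserts. First, the paper simply notes that a vertex leaves (resp.\ joins) a given $\bset$-staircase at most once; you derive this from the nestedness of the quadrant cones in $\beta$ --- the top-right and bottom-left cones widen while the top-left and bottom-right cones narrow, so $P$-freeness at a fixed apex is monotone in $\beta$, yielding the clean $4n$ bound. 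Second, your single-window lemma (for a fixed quadruple the overlap condition is $C_3>0$ together with positivity of $C_1\sin\beta+(C_2-C_3)\cos\beta$, a sinusoid with at most one zero in an open interval of length $\pi$) is a correct computation with no counterpart in the paper; I checked the apex parametrization and the sign of $C_2=\Delta^{TR}+\Delta^{BL}$, and they are right. Finally, the step you honestly flag as delicate and only sketch --- that no new top-right/bottom-left overlap can open strictly inside $\phi_I$ except at a vertex event, which is what lets you aggregate the potentially quadratic family of quadruple windows into $O(n)$ transitions --- is exactly the assumption the paper's proof also makes without argument: it is implicit in the sentence preceding the lemma and again in case~\ref{bhull:step_1} of the release-event procedure, where a deletion event generates ``at most one new overlapping region.'' So relative to the paper's own proof your attempt is not missing anything; you have reproduced its structure, discharged two of its unproven assertions, and correctly located the one obligation both arguments leave open.
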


We now show how to compute the sequence of increasing angles that mark
vertex and overlapping events during an angular sweep.

\paragraph{Insertion and deletion events.}

The set of vertices of $\bhullp$ on the top-right $\bset$-staircase
has a total ordering that, at any value of $\beta$ is given by
traversing the staircase along its direction. At the
initial configuration, the order is also given by the sequence
$p_1,\ldots,p_n$ of points in $P$ labeled in ascending vertical
order.

Let us consider the set
$\alpha(P) = \{ \alpha_1, \ldots, \alpha_{n-1} \}$ where for each
$\alpha_i$, the slope of the line through $p_i$ and $p_{i+1}$ equals
$\tan(\alpha_i)$. In an increasing sweep, the first point leaving
the top-right $\bset$-staircase is $p_i$.
Indeed, for any $\beta > \alpha_i$, a
top-right $\bset$-quadrant with apex over $p_i$ is not $P$-free. This
is not the case for points corresponding to any $\alpha_j$ such that
$\alpha_j > \alpha_i$ and $\alpha_j > \beta$. See \Cref{bhull:fig:events}.

\begin{figure}[ht]
  \centering
  \begin{minipage}{0.9\textwidth}
    \centering
    \includegraphics[clip=true,trim=0 0.4cm 0 0, scale=1.2]{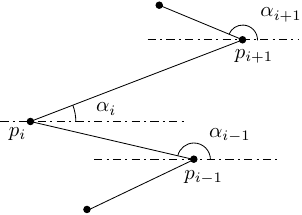}
    \caption{Insertion and deletion events for the top-right
      $\bset$-staircase.}
    \label{bhull:fig:events}
  \end{minipage}
\end{figure}

To compute the next value of $\beta$ where a point will leave the
top-right $\bset$-staircase, we must remove $\alpha_i$ from
$\alpha(P)$, update $\alpha_{i-1}$ to the angle where the slope of the
line through $p_{i-1}$ and $p_{i+1}$ equals $\tan(\alpha_{i-1})$,
and compute the new smallest element of $\alpha(P)$. A recursive
repetition of this computation allows us to obtain all deletion events
corresponding to the top-right $\bset$-staircase.

\begin{lemma}\label{bhull:lemma:point_events}
  All insertion and deletion events can be computed in $O(n\log n)$
  time and $O(n)$ space.
\end{lemma}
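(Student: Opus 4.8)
The plan is to reduce the computation of all events to four independent runs of a single priority-queue procedure, one per $\bset$-staircase, and to argue that each run costs $O(n\log n)$ time and $O(n)$ space. I will describe the procedure in detail for the deletion events of the top-right $\bset$-staircase during an increasing sweep, since the discussion preceding the lemma already identifies the next such event as the one corresponding to the smallest angle in $\alpha(P)$.

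First I would sort $P$ in ascending vertical order as $p_1,\dots,p_n$ in $O(n\log n)$ time, store this sequence in a doubly-linked list representing the current staircase, and initialize a min-priority queue $H$ containing the angles $\alpha_1,\dots,\alpha_{n-1}$, where each $\alpha_i$ keeps cross-pointers to the adjacent pair $(p_i,p_{i+1})$ it encodes. Building $H$ takes $O(n)$ time by heapify and $O(n)$ space. The procedure then repeatedly performs extract-min: retrieving $\alpha_i$ reports that $p_i$ leaves the staircase at $\beta=\alpha_i$; to restore the structure I would splice $p_i$ out of the list in $O(1)$ time so that $p_{i-1}$ and $p_{i+1}$ become adjacent, delete $\alpha_i$ from $H$, and reset the key of $\alpha_{i-1}$ to the angle of the line through $p_{i-1}$ and $p_{i+1}$. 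Each of these queue operations costs $O(\log n)$ on a heap augmented with element positions. Since a point leaves a given staircase at most once, there are at most $n-1$ iterations, for a total of $O(n\log n)$ time.

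The correctness of the procedure rests on the invariant that $H$ always holds exactly the angles determined by consecutive vertices of the current staircase. This is immediate from the update rule, and it guarantees that each extract-min returns the genuinely next deletion. Moreover, writing $p_{i+1}-p_{i-1}=(p_i-p_{i-1})+(p_{i+1}-p_i)$ as a sum of two upward-pointing vectors whose directions are $\alpha_{i-1}$ and $\alpha_i$, the direction of their sum lies between $\alpha_i$ and $\alpha_{i-1}$; hence the angle assigned to the new pair $(p_{i-1},p_{i+1})$ is never smaller than the angle just extracted. Consequently the reported events come out in non-decreasing order of $\beta$, exactly as the angular sweep requires.

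Finally, the remaining staircases are handled by symmetry. The bottom-left staircase is obtained from the top-right one under the central reflection $(x,y)\mapsto(-x,-y)$, which exchanges opposite $\bset$-quadrants while fixing $\beta$, so its deletion events are computed by the same procedure applied to the reflected point set. For the top-left and bottom-right staircases the sweep produces insertions rather than deletions; but a point joins one of these staircases during the increasing sweep precisely when it leaves the same staircase during the decreasing sweep, at the same value of $\beta$. Since reflecting $P$ about the $y$-axis maps $\beta$ to $\pi-\beta$ and interchanges these staircases with the top-right and bottom-left ones, each insertion stream is again an instance of the deletion procedure run on a reflected copy of $P$, with event angles recovered by $\beta\mapsto\pi-\beta$. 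Four runs of an $O(n\log n)$-time, $O(n)$-space algorithm give the claimed bounds. The main obstacle is precisely the bookkeeping behind the invariant: verifying that a single key update per event keeps $H$ synchronized with the evolving adjacency structure, and that the between-angles property above yields the monotone ordering of events on which the sweep depends.
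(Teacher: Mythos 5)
Your proposal is correct and follows essentially the same route as the paper: sort the points by $y$-coordinate, keep the consecutive-pair angles $\alpha(P)$ in a min-priority queue, repeatedly extract the minimum, splice out the departing vertex, and update the single affected key, for $O(n\log n)$ time and $O(n)$ space over the $O(n)$ events, then handle the other three staircases analogously. Your additions --- the vector-sum argument that the updated angle never drops below the extracted minimum, and the explicit reflections (central, and about the $y$-axis with $\beta\mapsto\pi-\beta$) reducing the other staircases and the insertion events to the same deletion procedure --- are sound formalizations of what the paper dispatches by symmetry and its accompanying figure.
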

\begin{proof}
  Store the points in $P$ in a balanced search tree ordered according
  to the $y$-axis, and the set $\alpha(P)$ in a priority queue. From
  \Cref{bhull:lemma:linear_events}, the algorithm described above
  requires $O(n \log n)$ time and $O(n)$ space to compute the sets of
  insertion and deletion events, associated with the top-right
  $\bset$-staircase.~Considering the angles shown in
  \Cref{bhull:fig:point_events}, a similar algorithm can be used to
  obtain the corresponding events for the remaining $\bset$-staircases
  in the same time and space complexity.
\end{proof}

\begin{figure}[ht]
  \centering
  \begin{minipage}{0.9\textwidth}
    \centering
    \includegraphics[clip=true,trim=0 0.4cm 0 0, scale=1.2]{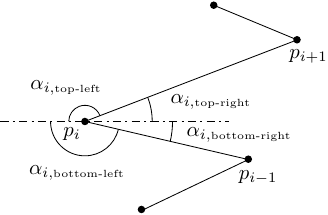}
    \caption{\Cref{bhull:lemma:point_events}.}
    \label{bhull:fig:point_events}
  \end{minipage}
\end{figure}

\paragraph{Overlap and release events.}

Let $Q_{r}$ and $Q_{l}$ be respectively, a pair of overlapping
top-right and bottom-left maximal $\bset$-quadrants. Consider that
$Q_r$ is supported by the vertices $p_j,p_{j+1}$, and $Q_l$ by the
vertices $p_k,p_{k+1}$. Also, assume the supporting points are labeled
according to the total ordering of their corresponding staircases
(see Figure~\ref{bhull:fig:eventsw}).

\begin{figure}[ht]
  \centering
  \begin{minipage}{0.9\textwidth}
    \centering
    \includegraphics[clip=true,trim=0 0 0 0, scale=1.2]{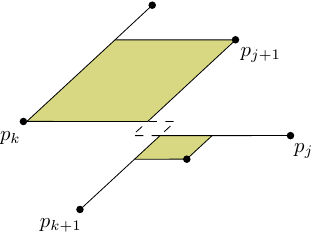}
    \caption{An overlapping region (bounded by dashed lines) generated
      by the intersection between a top-right and a bottom-left
      maximal $\bset$-quadrants.}
    \label{bhull:fig:eventsw}
  \end{minipage}
\end{figure}

The \emph{full overlap event} for the overlapping region defined by
$Q_{r}$ and $Q_{l}$ is the angle $\omega$ for which the slope of the
line through $p_{j+1}$ and $p_{k+1}$ equals $\tan(\omega)$. If
the supporting points do not leave their corresponding staircases,
this event marks the value of $\beta$ where the overlapping region
disappears.

Let $\omega(P)$ be the set of full overlap events for all the
overlapping regions at the initial increasing configuration, and
$\alpha_d(P)$ the set of all deletion events corresponding to the
vertices over the top-right and bottom-left $\bset$-staircases. Let
$\omega_{m}$ and $\alpha_m$ be the smallest values in $\omega(P)$ and
$\alpha_d(P)$, respectively. Performing an increasing sweep, to obtain
the first release event, we need to deal with the following cases:
\begin{enumerate}
\item \label{bhull:step_1} $\alpha_{m}$ corresponds to a supporting
  point, and $\alpha_{m} \leq \omega_{m}$.\, In this case,
  $\alpha_{m}$ needs to be processed and $\omega(P)$ needs to be
  updated. By removing a supporting point, at most two overlapping
  regions are terminated (two release events are added to
  $\omega(P)$), and at most one new overlapping region is generated
  (one overlapping event and one full overlap event are added to
  $\omega(P)$). After updating $\omega(P)$, $\omega_{m}$ and
  $\alpha_{m}$ are recomputed and the test is repeated.
\item $\alpha_{m}$ does not correspond to a supporting point. In this
  case, $\omega_{m}$ is the first release event.
\end{enumerate}

To compute the next release event, we must remove the current release
event from $\omega(P)$, and recompute $\omega_m$ as described above. A
recursive repetition of these steps allow us to obtain all release
events caused by intersections between the top-right and bottom-left
$\bset$-staircases.

\begin{lemma}\label{bhull:lemma:overlap_events}
  All overlap and release events can be computed in $O(n \log n)$ time
  and $O(n)$ space.
\end{lemma}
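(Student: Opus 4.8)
The plan is to bound separately the cost of computing the set $\omega(P)$ of initial full overlap events, and the cost of the recursive update procedure described above, showing that each is $O(n\log n)$ time and $O(n)$ space.

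The plan is to store the two relevant event sets in priority queues, build the initial collection $\omega(P)$ efficiently, and then show that the recursive procedure described above performs only $O(n)$ priority-queue operations in total, each costing $O(\log n)$ time.

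First I would set up the data structures. By \Cref{intro:thm:fixed_computation}, at the initial increasing configuration $\beta_I$ the $O(n)$ overlapping regions generated by the top-right and bottom-left $\bset$-staircases can be computed in $O(n\log n)$ time and $O(n)$ space. For each such region, defined by quadrants $Q_r$ and $Q_l$ supported by $p_j,p_{j+1}$ and $p_k,p_{k+1}$, its full overlap event is the $O(1)$-computable angle whose tangent equals the slope of the line through $p_{j+1}$ and $p_{k+1}$. I would insert all of these into a priority queue representing $\omega(P)$, keyed by angle; the deletion events $\alpha_d(P)$ for the top-right and bottom-left staircases are already available in a priority queue from \Cref{bhull:lemma:point_events}. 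In addition, I would maintain an auxiliary table that, for each supporting point, records the (at most two) overlapping regions it currently supports, so that when the point is deleted the affected regions can be located in constant time.

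Next I would analyze the recursive procedure. Each iteration either processes the smallest deletion event $\alpha_m$ (Step~\ref{bhull:step_1}) or reports the current smallest full overlap event $\omega_m$ as a release event. Processing $\alpha_m$ amounts to a constant number of priority-queue updates: using the auxiliary table we delete the at most two terminated regions from $\omega(P)$ and insert the at most one newly created region together with its full overlap event, each update costing $O(\log n)$. Reporting and then removing a release event from $\omega(P)$ likewise costs $O(\log n)$, while recomputing $\omega_m$ and $\alpha_m$ is a constant-time query on the roots of the two queues.

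Finally, to bound the total work I would appeal to \Cref{bhull:lemma:linear_events}: an angular sweep triggers only $O(n)$ events, so the loop executes $O(n)$ times and performs $O(n)$ priority-queue operations overall, yielding $O(n\log n)$ time and $O(n)$ space; the symmetric argument on the opposite pair of staircases accounts for the remaining overlap and release events within the same bounds. The step I expect to be the main obstacle is maintaining the correspondence between supporting points and their overlapping regions so that every deletion is resolved in logarithmic time, together with confirming that the number of newly created regions never exceeds the linear budget guaranteed by \Cref{bhull:lemma:linear_events}; once this bookkeeping is established, the complexity bounds follow immediately.
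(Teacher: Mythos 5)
Your proposal is correct and follows essentially the same route as the paper's proof: store $\alpha_d(P)$ and $\omega(P)$ in priority queues, charge $O(\log n)$ per event, and invoke \Cref{bhull:lemma:linear_events} to bound the number of events by $O(n)$, then handle the opposite pair of $\bset$-staircases symmetrically. In fact you supply slightly more detail than the paper (the auxiliary table mapping each supporting point to its at most two overlapping regions, where the paper only mentions a balanced search tree on $P$), but the argument is the same.
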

\begin{proof}
  Store the points in $P$ in a balanced search tree ordered according
  to the $y$-axis, and the sets $\alpha_d(P),\omega(P)$ in priority
  queues. From \Cref{bhull:lemma:linear_events}, the algorithm
  described above requires $O(n \log n)$ time and $O(n)$ space to
  compute the sets of overlap and release events associated with the
  top-right and bottom-left $\bset$-staircases.~A similar algorithm
  can be used to obtain the events associated to the top-right and
  bottom-left $\mathcal{O}_{\beta}$-staircases, with the same time and
  space upper bounds.
\end{proof}

\paragraph{Maintaining $\bhullp$.}

Considering the previous results, the maintenance of $\bhullp$ is
straightforward:
\begin{enumerate}
\item \label{bhull:maintain:step_1} Compute all vertex and overlap
  events, and store them in a list sorted by appearance during an
  increasing sweep.
\item \label{bhull:maintain:step_2} Compute
  $\bhullp[P][\beta_I]$. Store in height balanced trees the total
  orders of the sets of vertices lying over the four
  $\bset$-staircases. Store the set of overlapping regions in any
  constant-time access data structure (such as a hash table).
\item \label{bhull:maintain:step_3} Simulate the angular sweep by
  traversing the list of events.~At each insertion and deletion event,
  update the corresponding set of vertices.~At each overlap and
  release event, update the set of overlapping regions.
\end{enumerate}

From \Cref{bhull:lemma:point_events,bhull:lemma:overlap_events}, to
compute the sets of vertex and overlap events, we require
$O(n \log n)$ time and $O(n)$ space.
As we have a linear number of elements on each set,
we can merge them into a single ordered set using
$O(n \log n)$ time. Thus,
\cref{bhull:maintain:step_1} requires $O(n \log n)$ time and $O(n)$
space.

From \Cref{intro:thm:fixed_computation}, computing $\bhullp$ for any
fixed value of $\beta$ takes $O(n \log n)$ time and $O(n)$
space. Every $\bset$-staircase contains at most $n$ elements and
therefore, to store their total order in a height balanced tree we
require $O(n \log n)$ time.
Using a hash table, we can initialize the set of overlapping regions
in $O(n)$ time. Therefore,
\cref{bhull:maintain:step_2} requires $O(n \log n)$ time and $O(n)$
space.

At each insertion and deletion event, updating the corresponding set
of $\bset$-maximal elements requires $O(\log n)$ time per
operation. Updates on the set of overlapping regions takes constant
time, so \cref{bhull:maintain:step_3} takes $O(n \log n)$ time.~From
this analysis we get that, in total, we can compute and maintain
$\bhullp$ through an angular sweep in $O(n \log n)$ time and $O(n)$
space.  From \Cref{intro:thm:fixed_computation}, this time complexity
is optimal.

\begin{theorem}
  Computing and maintaining $\bhullp$ through an angular sweep
  requires $\Theta(n \log n)$ time and $O(n)$ space.
\end{theorem}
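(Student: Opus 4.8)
The plan is to establish the two bounds separately: the $O(n\log n)$ upper bound by aggregating the three phases of the maintenance algorithm, and the matching $\Omega(n\log n)$ time lower bound, together with the $O(n)$ space bound, by a reduction to convex-hull computation. First I would account for the preprocessing. By \Cref{bhull:lemma:point_events} and \Cref{bhull:lemma:overlap_events}, the complete lists of insertion/deletion and of overlap/release events for all four $\bset$-staircases can be produced in $O(n\log n)$ time and $O(n)$ space. Since \Cref{bhull:lemma:linear_events} guarantees that each of these lists has linear length, a standard merge orders the combined event stream by angle of appearance in $O(n\log n)$ additional time, completing \cref{bhull:maintain:step_1}. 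For \cref{bhull:maintain:step_2}, \Cref{intro:thm:fixed_computation} yields $\bhullp[P][\beta_I]$ in $\Theta(n\log n)$ time, after which loading the at most $n$ vertices of each staircase into a height-balanced tree costs $O(n\log n)$ and seeding the hash table of overlapping regions costs $O(n)$.

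Next I would analyze the sweep itself in \cref{bhull:maintain:step_3}. Traversing the sorted event list touches each of the $O(n)$ events exactly once. Each insertion or deletion event performs a constant number of updates on a single balanced tree at $O(\log n)$ per operation, while each overlap or release event edits the hash table in $O(1)$. Summing over the linear number of events gives $O(n\log n)$ for the entire sweep. Since no phase exceeds linear working space, the total space is $O(n)$, and the three phases together run in $O(n\log n)$ time.

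For optimality I would invoke the remark accompanying \Cref{intro:thm:fixed_computation}, namely that the convex hull of $P$ is recoverable from $\bhullp$ in linear time. Because sorting reduces to convex-hull computation, the latter carries an $\Omega(n\log n)$ lower bound in the algebraic decision-tree model; hence even producing the structure at a single value of $\beta$—let alone maintaining it across the sweep—cannot be done faster than $\Omega(n\log n)$. Combining the two bounds yields $\Theta(n\log n)$ time and $O(n)$ space.

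The step I expect to require the most care is the amortized accounting in the sweep. One must confirm that the event list computed in \cref{bhull:maintain:step_1} stays consistent as supporting points leave their staircases, so that the recomputations of $\omega(P)$ described earlier neither inflate the $O(n)$ bound on the event count of \Cref{bhull:lemma:linear_events} nor cost more than an $O(\log n)$ priority-queue operation apiece. Verifying that every such recomputation is charged to a distinct event, and that the per-event update on the balanced trees and the hash table is genuinely $O(\log n)$ and $O(1)$ respectively, is where the argument must be made precise.
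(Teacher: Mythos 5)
Your proposal is correct and follows essentially the same route as the paper: the same three-phase decomposition (event computation via \Cref{bhull:lemma:point_events,bhull:lemma:overlap_events} plus a merge, initialization via \Cref{intro:thm:fixed_computation} with height-balanced trees and a hash table, then an $O(\log n)$-per-event sweep over the $O(n)$ events of \Cref{bhull:lemma:linear_events}), and the same lower bound drawn from the convex-hull reduction noted alongside \Cref{intro:thm:fixed_computation}. Your closing remarks on charging each recomputation of $\omega(P)$ to a distinct event merely make explicit the accounting the paper performs inside \Cref{bhull:lemma:overlap_events}, so nothing further is needed.
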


\section{Application problems}\label{sec:applications}

In this section we extend the results from Section \ref{sec:bhull} to
the solution of related optimization problems. We deal with the
problem of maximizing the area and the perimeter of $\bhullp$ (Sections~\ref{sec:apps:area} and~\ref{sec:apps:perimeter}, respectively). As an extra application, in~\ref{sec:apps:fitting} we deal with the
problem of fitting a two-joint polygonal chain to a point set.

\subsection{Area optimization.}\label{sec:apps:area}

In this section we solve the following problem:

\begin{problem}[Maximum area]
  Given a set $P$ of $n$ points in the plane, compute the value of
  $\beta$ for which $\bhullp$ has maximum area.
\end{problem}

Let $\{ \beta_1, \ldots, \beta_{O(n)} \}$ be the sequence of (vertex
and overlapping) events, ordered by appearance during an increasing
sweep. Following the lines of~\citet{bae_2009} (see also \Cref{apps:area:fig:area}), we express the area of
$\bhullp$ for any $\beta \in [\beta_i,\beta_{i+1})$ as
\begin{equation}
  \label{apps:area:eqn:area}
  \area(\bhullp) = \area(\polygon)
  - \sum_i \area(\triangles)
  + \sum_j \area(\parallelograms),
\end{equation}
where
 $\mathcal{P}(\beta)$ denotes the (simple) polygon having the same vertices as $\mathcal{O}_{\beta}\mathcal{H}(P)$ and an edge connecting two vertices if they are
consecutive in a $\mathcal{O}_{\beta}$-staircase.~The term $\triangle_i(\beta)$ is the $i$-th triangle defined by two consecutive vertices in a $\mathcal{O}_{\beta}$-staircase, and $\parallelogram_j(\beta)$ is the $j$-th overlapping region defined by the intersection of two opposite $\mathcal{O}_{\beta}$-staircases.

\begin{figure}[ht]
  \centering
  \begin{minipage}{0.9\textwidth}
    \centering
    \includegraphics[scale=1.5]{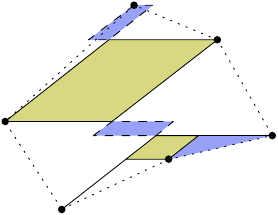}
    \caption{The area of $\bhullp$. The polygon $\polygon$ is bounded
      by dotted lines. A triangle $\triangles$ and two parallelograms
      $\parallelograms$ are filled in blue.}
    \label{apps:area:fig:area}
  \end{minipage}
\end{figure}

Our general approach is to maintain the terms of
\Cref{apps:area:eqn:area} during a complete angular sweep. We first
compute the optimal value of $\beta$ for $[\beta_1,\beta_2)$. We then
traverse the event sequence, updating the affected terms in
\Cref{apps:area:eqn:area} at each event. At the same time, we compute
the local angle of maximum area for each $[\beta_i,\beta_{i+1})$. With
any new computation, we keep the local optimal angle only if the
previous maximum area is improved.

\paragraph{The polygon $\polygon$.}

At any fixed value of $\beta$, the polygon can be constructed from the
vertices of $\bhullp$ in linear time. Once constructed, it takes a
second linear run to compute its area. During an interval between
events the area does not change. As $\polygon$ only depends on the
vertices of $\bhullp$, it is only modified by insertion and deletion
events. Each event can be handled in constant time: the area of a
triangle needs to be added (deletion event) or subtracted (insertion
event) from the previous value of the area of $\polygon$. See
\Cref{apps:area:fig:polygon}.

\begin{figure}[ht]
  \centering
  \subcaptionbox{\label{apps:area:fig:polygon:1}}
  {\includegraphics[scale=1.2]{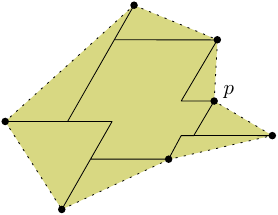}}
  \hspace{1.5cm}
  \subcaptionbox{\label{apps:area:fig:polygon:2}}
  {\includegraphics[scale=1.2]{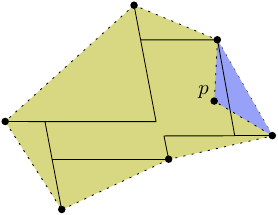}}
  \caption{Updating
    $\area(\polygon)$. \subref{apps:area:fig:polygon:1} The vertex $p$
    will leave the top-right $\bset$-staircase in an increasing sweep.
    \subref{apps:area:fig:polygon:2} The area of a triangle needs to
    be added after the deletion event from $\area(\polygon)$, once $p$ is no longer a
    vertex.}
  \label{apps:area:fig:polygon}
\end{figure}

\paragraph{The triangles $\triangles$.}

A triangle is defined by a pair of consecutive vertices of $\polygon$. If we
consider a top-right $\bset$-staircase, the area of $\triangles$ is
bounded by a line through $p_i$ and $p_{i+1}$, an horizontal line
through $p_i$, and a line with slope $\tan(\beta)$ through
$p_{i+1}$. In this context, the area of $\triangles$ is given by

\begin{align}
  \label{apps:area:eqn:triangle_area}
  \area(\triangles) &= \left| (x_i-x_{i+1})(y_{i+1}-y_i)+(y_{i+1}-y_i)^2\cot(\beta) \right| \nonumber \\
                    &= \left| a_i \pm b_i \cot(\beta) \right|,
\end{align}
with $a_i, b_i$ constants, where $\left( x_i,y_i \right)$ and $\left( x_{i+1},y_{i+1} \right)$
are respectively, the coordinates of the points $p_i$ and $p_{i+1}$.

\begin{figure}[ht]
  \centering
  \subcaptionbox{\label{apps:area:fig:triangle:1}}
  {\includegraphics[scale=1.2]{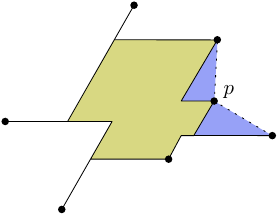}}
  \hspace{1cm}
  \subcaptionbox{\label{apps:area:fig:triangle:2}}
  {\includegraphics[scale=1.2]{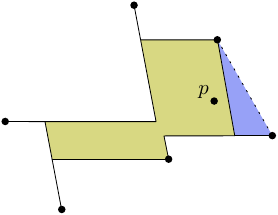}}
  \caption{Updating the term $\sum_i \area(\triangles)$.
    \subref{apps:area:fig:triangle:1} The point $p$ will leave the
    top-right $\bset$-staircase during an increasing
    sweep. \subref{apps:area:fig:polygon:2} When $p$ is no longer a
    vertex, two triangles are deleted, and a new triangle is created.}
  \label{apps:area:fig:triangle}
\end{figure}

The term $\sum_i \area(\triangles)$ is impacted by insertion and
deletion events and, at each event, it needs to be modified a constant
number of times. As any vertex of $\bhullp$ supports at most two
maximal $\bset$-quadrants, at a deletion event two triangles are
removed and one triangle is added. The converse occurs for insertion
events. See \Cref{apps:area:fig:triangle}.

\paragraph{The overlapping regions $\parallelograms$.}

An overlapping region is defined by two pairs of consecutive vertices
of $\bhullp$ belonging to opposite $\bset$-staircases. Overlapping
regions are bounded by parallelograms with sides parallel to the lines
in $\bset$. If we consider top-right and bottom-left
$\bset$-staircases intersecting as shown in
\Cref{apps:area:fig:parallelogram}, the area of a parallelogram is
given by
\begin{align}
  \label{apps:area:eqn:or_area}
  \area(\parallelograms) &= \left| (x_{k+1} - x_{i+1})(y_{k} - y_{i})
                           + (y_{k+1} - y_{i+1})(y_{k} - y_{i})\cot(\beta) \right| \nonumber \\
                         &= \left| a_j \pm b_j \cot(\beta) \right|,
\end{align}
with $a_i, b_i$ constants, where $p_i=(x_i,y_i),p_{i+1}=(x_{i+1},y_{i+1})$ and
$p_k=(x_k,y_k),p_{k+1}=(x_{k+1},y_{k+1})$ are respectively, the
supporting vertices of the overlapping maximal opposite
$\bset$-quadrants.

\begin{figure}[ht]
  \centering
  \subcaptionbox{\label{apps:area:fig:parallelogram:1}}
  {\includegraphics[scale=1.2]{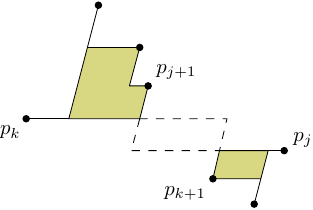}}
  \hspace{1.5cm}
  \subcaptionbox{\label{apps:area:fig:parallelogram:2}}
  {\includegraphics[scale=1.2]{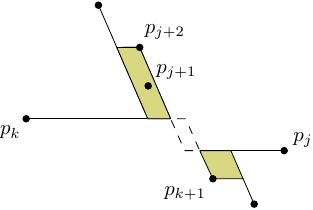}}
  \caption{An overlapping region destroyed because of the vertex
    $p_{j+1}$ leaving the top-right $\bset$-staircase, during an
    increasing sweep.}
  \label{apps:area:fig:parallelogram}
\end{figure}

The term $\sum_j \area(\parallelograms)$ is impacted by all types of
events. Overlap and release events require a single overlapping
region to be added or deleted. For insertion and deletion events, at
most two new overlaps are created, or destroyed.

\paragraph{Characterization.}

Before describing our algorithm, in the following lemmas we answer
some basic questions about the behavior of
$\area(\bhullp)$. \Cref{apps:area:lemma:max_angle,apps:area:lemma:bimodal}
imply that it seems not possible to restrict the number of
candidate angles of maximum area. On the other hand, Lemma
\ref{apps:area:lemma:events} shows that the angle of maximum area is
actually located at an event.

\begin{lemma}\label{apps:area:lemma:max_angle}
  For any $\beta_0 \in (0,\pi)$  there exists a point set $P$ such that \[\max_{\beta} \operatorname{area}(\mathcal{O}_\beta\mathcal{H}(P))\neq\operatorname{area}(\mathcal{O}_{\beta_0}\mathcal{H}(P)).\]
\end{lemma}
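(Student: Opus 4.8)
The plan is to prove the statement directly for a fixed $\beta_0$: I will exhibit a point set $P$ for which $\beta_0$ does not maximize $\area(\bhullp)$ over $\beta\in(0,\pi)$, i.e. some angle yields strictly larger area. The main tool is the observation that the area is a very simple function of $\beta$ on each event interval. By \eqref{apps:area:eqn:area} the area equals $\area(\polygon)$, which is constant on an interval between two consecutive events, plus the sums $\sum_i \area(\triangles)$ and $\sum_j \area(\parallelograms)$; by \eqref{apps:area:eqn:triangle_area} and \eqref{apps:area:eqn:or_area} each summand has the form $|a \pm b\cot(\beta)|$ with $a,b$ constant on the interval. Writing $t=\cot(\beta)$, a strictly decreasing bijection from $(0,\pi)$ onto $\mathbb{R}$, I conclude that $\area(\bhullp)$ is a continuous, piecewise-affine function of $t$, differentiable at every $\beta$ that is not an event and at which no argument $a\pm b\cot(\beta)$ vanishes, with derivative equal to the signed sum $\sum(\pm b)$ of the active coefficients.

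First I would choose $P$ so that $\beta_0$ is a regular point of this function. Since $P$ is finite it has only finitely many events (\Cref{bhull:lemma:linear_events}) and each argument $a\pm b\cot(\beta)$ vanishes at a single angle; hence a suitable choice of $P$ places $\beta_0$ strictly inside some event interval and away from all these zeros. On that interval $\area(\bhullp)$ is affine in $t$, so it is strictly monotone in $\beta$ near $\beta_0$ as soon as the slope $\sum(\pm b)$ is nonzero. A strict local monotonicity at $\beta_0$ immediately rules out $\beta_0$ being a local, hence global, maximizer, which is exactly the assertion of the lemma.

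It remains to guarantee a nonzero slope, and this is where I would do the only genuine work. The cleanest route is to take a minimal configuration whose decomposition near $\beta_0$ contains a single triangle and no overlapping region: then the slope is simply $\pm b_1$ with $b_1=(y_2-y_1)^2$ from \eqref{apps:area:eqn:triangle_area}, which is strictly positive because the no--horizontal--pair assumption forces $y_1\neq y_2$. To cover an \emph{arbitrary} $\beta_0\in(0,\pi)$ I would start from such a fixed witness and apply a horizontal scaling $(x,y)\mapsto(\lambda x,y)$: this multiplies every constant $a$ by $\lambda$ while leaving every $b$ unchanged, so in the variable $t$ it rescales all the breakpoints and moves the (unique) maximizing angle continuously with $\lambda$. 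As $\lambda$ ranges over $(0,\infty)$ that angle sweeps a nondegenerate set, so for every target $\beta_0$ some scaling yields a strictly monotone, and therefore non-maximizing, behaviour at $\beta_0$.

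The step I expect to be the main obstacle is precisely the nonvanishing of the slope at $\beta_0$: one must rule out both an exact cancellation among the triangle and parallelogram contributions and the possibility that $\beta_0$ lands on an absolute-value kink, where the piecewise-affine function could have a local maximum. Restricting to the minimal, explicitly computed configuration above sidesteps both issues simultaneously, so I would carry out the computation there and only afterwards remark that the same conclusion holds for generic coordinates, where the squared differences $b=(\,\cdot\,)^2$ cannot cancel and $\beta_0$ avoids the finitely many kinks and events.
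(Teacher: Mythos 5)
Your skeleton --- on each event interval the area is, by \eqref{apps:area:eqn:area}, \eqref{apps:area:eqn:triangle_area} and \eqref{apps:area:eqn:or_area}, piecewise affine in $t=\cot(\beta)$, so a nonzero slope at a regular interior $\beta_0$ rules out maximality --- is sound, and it is a genuinely different route from the paper, which simply exhibits for each $\beta_0$ an explicit four-point set (one point on each of the $y^+$-, $y^-$- and $x^+$-semiaxes of the coordinate system of $\bset[\beta_0]$, one in the second quadrant) with $\area(\bhullp)=0$ for \emph{every} $\beta\le\beta_0$ and $\area(\bhullp[P][\beta_1])>0$ for some $\beta_1>\beta_0$, so that $\beta_0$ trivially fails to maximize. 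However, your write-up has a real gap at exactly the step you flag as the crux. The ``minimal configuration whose decomposition near $\beta_0$ contains a single triangle and no overlapping region'' is asserted, never constructed, and its existence is doubtful: every point set carries \emph{four} $\bset$-staircases, and generically each staircase with at least two vertices contributes its own triangle to $\sum_i\area(\triangles)$ (already two or three points in general position put steps on several staircases at once). So the slope you must control is a signed sum $\sum_i(\mp b_i)$ over several staircases, not a single $\pm b_1$, and the clean reduction you rely on is not available. Your fallback --- generic coordinates prevent the cancellation $\sum_i(\pm b_i)=0$, a codimension-one condition since $b_i=(y_{i+1}-y_i)^2>0$ under the no-horizontal-pair assumption --- is plausibly correct, but then \emph{that} is the actual argument and it must be made, including checking it is compatible with placing $\beta_0$ strictly inside an interval and away from the kinks of the absolute values.

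The second gap is the covering step for arbitrary $\beta_0$, which as stated is a non sequitur: from ``the (unique) maximizing angle sweeps a nondegenerate set as $\lambda$ varies'' one cannot conclude that every target $\beta_0\in(0,\pi)$ admits a $\lambda$ with strictly monotone behaviour at $\beta_0$; moreover the uniqueness of the maximizer is unproven (a piecewise-affine function of $t$ may attain its maximum along a flat piece, and you have not excluded zero-slope pieces globally). What the scaling $(x,y)\mapsto(\lambda x,y)$ actually gives --- and what you should say --- is that it conjugates the entire event-and-kink structure by $t\mapsto\lambda t$ (event angles satisfy $\cot\alpha=\Delta x/\Delta y\mapsto\lambda\cot\alpha$, the $a_i$ scale by $\lambda$, the $b_i$ and hence all interval slopes are unchanged), so a monotone witness interval around $t^*=\cot(\beta^*)$ transports to any target $t_0=\cot(\beta_0)$ \emph{of the same sign} via $\lambda=t_0/t^*>0$; the reflection $x\mapsto -x$ handles the opposite sign, and $\beta_0=\pi/2$ (i.e.\ $t_0=0$) needs a separate witness, since $\lambda t^*$ never reaches $0$ for $\lambda>0$. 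With the genericity claim made honest and the covering step repaired along these lines your proof goes through, but it is considerably heavier than the paper's two-line construction, which settles each $\beta_0$ directly without any analysis of the area function.
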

\begin{proof}
  Consider the coordinate system formed by $\bset[\beta_0]$. Place one
  point over the $y^+$-, $y^-$-, and $x^+$-semiaxes, and a point over
  the second quadrant (see \Cref{apps:area:fig:max_angle:1}). From
  this position, note that $\area(\bhullp) = 0$ for any
  $\beta \leq \beta_0$ (\Cref{apps:area:fig:max_angle:2}), and there
  exists at least one $\beta_1 > \beta_0$ such that
  $\area(\bhullp[P][\beta_1]) \neq 0$
  (\Cref{apps:area:fig:max_angle:3}). Hence $\beta_0$ cannot be the
  angle of maximum area.
\end{proof}

\begin{figure}[]ht
  \centering
  \subcaptionbox{\label{apps:area:fig:max_angle:1}}
  {\includegraphics{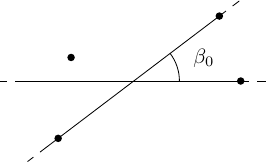}}
  \hspace{.5cm}
  \subcaptionbox{\label{apps:area:fig:max_angle:2}}
  {\includegraphics{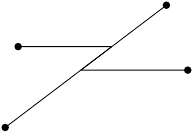}}
  \hspace{.5cm}
  \subcaptionbox{\label{apps:area:fig:max_angle:3}}
  {\includegraphics{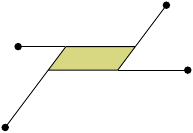}}
  \caption{Lemma
    \ref{apps:area:lemma:max_angle}. \subref{apps:area:fig:max_angle:1}
    The set of points. \subref{apps:area:fig:max_angle:2}
    $\area(\bhullp) = 0$ for $\beta \leq
    \beta_0$.
    \subref{apps:area:fig:max_angle:3} $\area(\bhullp) \neq 0$ for
    some $\beta > \beta_0$.}
  \label{apps:area:fig:max_angle}
\end{figure}

\begin{lemma}\label{apps:area:lemma:bimodal}
  For any $\beta_0,\beta_1 \in (0,\pi)$, there exists a point set $P$ for which $\area(\mathcal{O}_{\beta}(P))$ has local maxima in $\beta_0$ and $\beta_1$.
\end{lemma}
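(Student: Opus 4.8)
The plan is first to turn the statement into a condition on the slope of a piecewise function, and then to realize that condition with an explicit, scalable configuration. Fix an interval $[\beta_i,\beta_{i+1})$ between two consecutive events. On it the vertex set of $\bhullp$, the order of the vertices along each $\bset$-staircase, and the list of overlapping regions are all constant, so $\area(\polygon)$ is constant and, by \Cref{apps:area:eqn:triangle_area,apps:area:eqn:or_area}, each triangle term and each overlapping-region term in \Cref{apps:area:eqn:area} equals $|a\pm b\cot\beta|$ with the sign inside the modulus fixed on the interval, since a sign change would be a degeneracy, that is, an event. Hence $\area(\bhullp)=A+B\cot\beta$ is affine in $\cot\beta$ there; as $\cot\beta$ is strictly monotone on $(0,\pi)$, the area is strictly monotone between consecutive events, so every local maximum is attained at an event. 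Moreover $\theta$ is a local maximum exactly when it is an event at which the slope coefficient $B$ passes from negative to positive, so that the area increases just below $\theta$ and decreases just above. It therefore suffices to exhibit, for the given $\beta_0$ and $\beta_1$, a point set whose area realizes this sign change at both angles.

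Next I would build a constant-size gadget $G(\theta)$ whose area, as a function of $\beta$, is a sharp ``tent'' with apex exactly at a prescribed $\theta\in(0,\pi)$ and is monotone on each side of a small window around $\theta$. Working in the coordinate system of $\bset[\theta]$, I adapt the degenerate placement used in \Cref{apps:area:lemma:max_angle}: a few points are positioned so that, for $\beta$ just below $\theta$, a single top-right/bottom-left overlapping region is present and growing. By \Cref{apps:area:eqn:or_area} its area is affine in $\cot\beta$ with a slope whose sign is dictated by the vertical order of its supporting points, and for this orientation the contribution makes $\area(\mathcal{O}_\beta\mathcal{H}(G(\theta)))$ increase with $\beta$. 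At $\theta$ an event releases this region while the residual (triangle) terms are arranged to give a positive coefficient $B$, so the area decreases immediately above $\theta$; the jump in $B$, hence the sharpness of the apex, can be made as large as desired by scaling the gadget. The reflection $\beta\mapsto\pi-\beta$, realized by mirroring the points across a vertical line and exchanging the roles of the opposite $\bset$-staircases, lets me place the apex at any $\theta$ in either half of $(0,\pi)$.

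Finally I would superpose two gadgets, taking $P=G(\beta_0)\cup G(\beta_1)$ with $\beta_0<\beta_1$, translated so that the two clusters sit at essentially the same height but are separated by a very large horizontal distance. With such a placement the overlapping regions and triangles internal to each cluster are preserved, and the only new terms in \Cref{apps:area:eqn:area} come from the long, thin bridge that joins the clusters and the triangles it induces. All of these new terms are affine in $\cot\beta$ on each piece, and---crucially---by keeping the vertical extent of the bridge small, their $\cot\beta$-coefficients stay small in absolute value; in particular the bridge contributes no kink, and only a controllably small slope, at either $\beta_0$ or $\beta_1$. Since adding to a function a term that is differentiable at $\theta$ with small slope preserves a sufficiently sharp kink-type local maximum, keeping the one-sided slopes of opposite sign, the two tent apices survive as local maxima located exactly at $\beta_0$ and $\beta_1$, which is the assertion.

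The step I expect to be the main obstacle is this last interaction analysis. Far-apart clusters do not give disjoint $\bset$-hulls---the staircases of the union run through both clusters---so I must choose the relative placement carefully enough that (i) the peak-producing overlapping regions of each gadget remain $P$-free and maximal in the union, (ii) no event of one gadget other than its apex falls on the apex of the other, and (iii) the bridge and the far gadget together contribute, near each $\beta_i$, only a kink-free affine-in-$\cot\beta$ term whose slope is dominated by that gadget's apex jump. I would secure all three by exploiting exactly the freedom granted by the existential quantifier: a generic horizontal separation to avoid coincidences of events, a shear or height offset so that neither cluster intrudes into the quadrants defining the other's overlaps, and a large enough scale for each gadget so that its apex kink dominates every slope contributed by the rest of the configuration.
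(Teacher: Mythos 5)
Your first paragraph is fine and matches the paper's own \Cref{apps:area:lemma:events}: between consecutive events the area is of the form $A+B\cot(\beta)$, hence monotone, so local maxima sit at events where the coefficient changes sign appropriately. But from there on the proposal is a plan rather than a proof, and the gap is exactly where you flag it. First, the gadget $G(\theta)$ is never constructed. Adapting the configuration of \Cref{apps:area:lemma:max_angle} does not give a ``tent'': that placement yields an area that is \emph{zero} for all $\beta\leq\theta$ and positive just after, i.e.\ a ramp, which is a kink of the wrong type (a one-sided minimum, not a maximum). To get a maximum at $\theta$ you need an overlap region whose area is \emph{increasing} as $\beta\to\theta^{-}$ and an event at exactly $\theta$ after which the area decreases; this cannot be a full-overlap release (those occur precisely when the parallelogram area shrinks to zero, so the area is decreasing, not increasing, just before), so it must be engineered through a deletion event of a supporting vertex, with explicit coordinates forcing that event to occur at $\theta$ and with the one-sided slopes checked via \Cref{apps:area:eqn:triangle_area,apps:area:eqn:or_area}. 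None of this is done. Second, the superposition $P=G(\beta_0)\cup G(\beta_1)$ is asserted, not established: placing the clusters ``at essentially the same height'' is in direct tension with your requirement (i), since at comparable heights the right cluster invades the top-right quadrants of the left cluster's vertices (and symmetrically for bottom-left quadrants), destroying precisely the staircase vertices that support the peak-producing overlaps; resolving this needs a quantified height offset, and then one must re-bound the $\cot(\beta)$-coefficients of the bridging terms, which by \Cref{apps:area:eqn:triangle_area} grow like the \emph{square} of vertical gaps. Saying the existential quantifier grants enough freedom to satisfy (i)--(iii) simultaneously is naming the obstacle, not overcoming it.

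For comparison, the paper avoids all of this with a single explicit four-point configuration handling both angles at once: points $p_l,p_r,p_t$ at the corners of the triangle bounded by the $x$-axis and two lines of slopes $\tan(\beta_0)$ and $\tan(\beta_1)$, plus one interior point $p_c$. Throughout the relevant range of the sweep the area of $\bhullp$ is a single overlap parallelogram of \emph{constant height}, so the area is monotone exactly when its base is, and the base increases up to $\beta_0$, decreases to an intermediate angle $\beta_{ct}$, increases again up to $\beta_1$, and then decreases --- giving local maxima at $\beta_0$ and $\beta_1$ directly, with no gadget calculus, no mirroring for the two halves of $(0,\pi)$, and no interaction analysis between far-apart clusters. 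If you want to salvage your modular approach, the honest route is to first write down one gadget with explicit coordinates and verified one-sided slopes at its apex event, and then carry out the union analysis quantitatively; but you should be aware that the two-angles-in-one-triangle trick makes the entire second half of your argument unnecessary.
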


\begin{proof}
  Let $\ell_0$ be a line with slope $\tan(\beta_0)$, $\ell_1$ a line
  with slope $\tan(\beta_1)$, and without loss of generality, let us
  assume that $\beta_0 < \beta_1$. We define $p_l, p_r,p_t,$ and $p_c$
  to be the points located respectively, at the left corner, right
  corner, top corner, and the interior of the triangle bounded by the
  $x$-axis, $\ell_0$, and $\ell_1$. See
  \Cref{apps:area:fig:bimodal_1}.

  \begin{figure}[ht]
    \centering
    \begin{minipage}{0.9\textwidth}
      \centering
      {\includegraphics[scale=1.2]{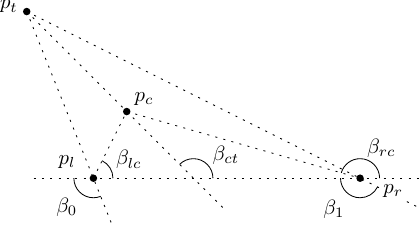}}
      \caption{The points configuration.}
      \label{apps:area:fig:bimodal_1}
    \end{minipage}
  \end{figure}

  Consider the angles $\beta_{lc}, \beta_{ct}$, and $\beta_{rc}$ as in
  \Cref{apps:area:fig:bimodal_1}. Note that
  $\beta_{lc} < \beta_{0} < \beta_{ct} < \beta_{1} < \beta_{rc}$.
  Using an increasing sweep from the initial increasing configuration the
  first release event is $\beta_{lc}$. From there, the area of
  $\bhullp$ is given by a parallelogram $\parallelogram_{lc}$ of
  constant height, so both the base of $\parallelogram_{lc}$ and the
  area of $\bhullp$ increase or decrease together as $\beta$
  changes. As $\beta$ goes from $\beta_{lc}$ to $\beta_{0}$, the base
  of $\parallelogram_{lc}$ increases up to $\beta_0$, there
  exist a local maximum. The base of $\parallelogram_{lc}$ then
  decreases from $\beta_0$ to $\beta_{ct}$, to increase again from
  $\beta_{ct}$ to $\beta_{1}$. At $\beta_{1}$ there is a second local
  maximum, as the base of $\parallelogram_{lc}$ starts decreasing
  again after $\beta_{1}$ up to the last construction event at
  $\beta_{rc}$, where the area of $\bhullp$ is zero. See
  \Cref{apps:area:fig:bimodal_2}.
\end{proof}

 \begin{figure}[ht]
    \centering
    \subcaptionbox{\label{apps:area:fig:bimodal_2:1}}
    {\includegraphics[scale=0.65]{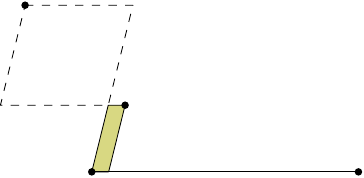}}
\hspace{-0.2cm}
    \subcaptionbox{\label{apps:area:fig:bimodal_2:2}}
    {\includegraphics[scale=0.65]{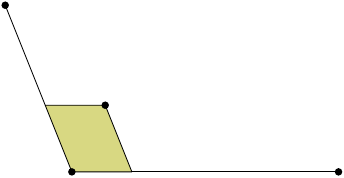}}
\hspace{-0.2cm}
    \subcaptionbox{\label{apps:area:fig:bimodal_2:3}}
    {\includegraphics[scale=0.65]{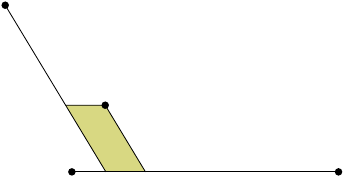}}
    \hspace{-0.2cm}
    \subcaptionbox{\label{apps:area:fig:bimodal_2:4}}
    {\includegraphics[scale=0.65]{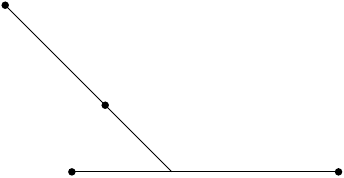}}
\hspace{-0.2cm}
    \subcaptionbox{\label{apps:area:fig:bimodal_2:5}}
    {\includegraphics[scale=0.65]{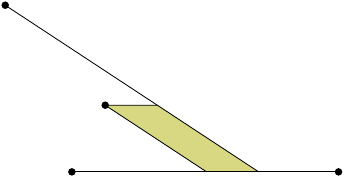}}
\hspace{-0.2cm}
    \subcaptionbox{\label{apps:area:fig:bimodal_2:6}}
    {\includegraphics[scale=0.65]{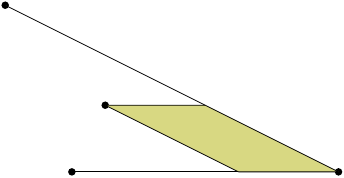}}
\hspace{-0.2cm}
    \subcaptionbox{\label{apps:area:fig:bimodal_2:7}}
    {\includegraphics[scale=0.7]{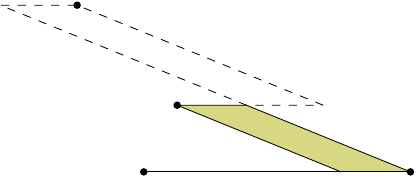}}
    \caption{Increasing sweep over the point set of
      \Cref{apps:area:fig:bimodal_1}. \subref{apps:area:fig:bimodal_2:1}
      $\beta=\beta_0-\epsilon$. \subref{apps:area:fig:bimodal_2:2} A
      local maximum on
      $\beta=\beta_0$. \subref{apps:area:fig:bimodal_2:3}
      $\beta\in(\beta_0,\beta_{ct})$. \subref{apps:area:fig:bimodal_2:4}
      A local minimum on
      $\beta=\beta_{ct}$. \subref{apps:area:fig:bimodal_2:5}
      $\beta\in(\beta_{ct},\beta_1)$. \subref{apps:area:fig:bimodal_2:6}
      A second local maximum on
      $\beta=\beta_1$. \subref{apps:area:fig:bimodal_2:7}
      $\beta=\beta_1+\epsilon$.}
    \label{apps:area:fig:bimodal_2}
  \end{figure}


\begin{lemma}\label{apps:area:lemma:events}
  The area of $\bhullp$ reaches its maximum at values of $\beta$
  belonging to the sequence of events.
\end{lemma}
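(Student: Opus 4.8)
The plan is to prove the statement interval by interval: on each interval between two consecutive events the area turns out to be a monotone function of $\beta$, so its maximum over the interval is attained at an endpoint, which is an event. First I would fix an interval $[\beta_i,\beta_{i+1})$ and use the defining property of events. Since no event occurs in the interior of this interval, the combinatorial structure of $\bhullp$ is constant throughout: the vertex set, the collection of triangles contributing to $\sum_i \area(\triangles)$, and the collection of overlapping regions contributing to $\sum_j \area(\parallelograms)$ are all fixed. Hence the index sets of the two sums in \Cref{apps:area:eqn:area} do not change, and $\area(\polygon)$ is constant because it depends only on the (fixed) vertices.

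Next I would resolve the absolute values in \Cref{apps:area:eqn:triangle_area,apps:area:eqn:or_area}. Each triangle term has the form $|a_i \pm b_i \cot(\beta)|$ and each overlap term the form $|a_j \pm b_j \cot(\beta)|$, so both are affine in the variable $\cot(\beta)$ away from the point where the expression inside the bars vanishes. The key observation is that these vanishing points coincide with events: a triangle $\triangles$ degenerates exactly when the line of slope $\tan(\beta)$ bounding it passes through both of its defining vertices, i.e. when $\beta$ equals the slope angle of those two consecutive staircase vertices, which is precisely a deletion or insertion event; and an overlapping region $\parallelograms$ has zero area exactly when it is created or destroyed, i.e. at an overlap or release event. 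Consequently, inside the open interval $(\beta_i,\beta_{i+1})$ no term changes sign, every absolute value equals a fixed affine function of $\cot(\beta)$, and therefore
\[
\area(\bhullp) = C + D\cot(\beta)
\]
for constants $C,D$ depending only on the interval.

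Finally I would invoke monotonicity. Because $\cot(\beta)$ is strictly decreasing on $(0,\pi)$, the function $C + D\cot(\beta)$ is monotone on $[\beta_i,\beta_{i+1})$ (its derivative $-D\csc^2(\beta)$ has constant sign), and hence attains its supremum over the closed interval $[\beta_i,\beta_{i+1}]$ at one of the endpoints, both of which are events. Since the inter-event intervals cover $(0,\pi)$, the global maximum of $\area(\bhullp)$ is attained at some event angle, which is the claim.

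The hard part will be the second step: justifying that each absolute value is smooth (a single affine branch in $\cot(\beta)$) across the whole open interval, i.e. that no term $|a \pm b\cot(\beta)|$ has an interior kink. Were such a kink to occur at a non-event angle, the per-interval area would only be \emph{piecewise} affine in $\cot(\beta)$, and its maximum could fall at the kink rather than at an endpoint, breaking the conclusion. Pinning down the geometric meaning of each zero — a degenerate triangle or a vanishing overlap — and matching it to the corresponding event is therefore the crux; once that is established, the monotonicity and the endpoint argument are routine.
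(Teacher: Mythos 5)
Your proof is correct and follows essentially the same route as the paper: fix an inter-event interval, resolve each absolute value in \Cref{apps:area:eqn:area} into a single affine branch in $\cot(\beta)$, and conclude that $\area(\bhullp) = A + B\cot(\beta)$ is monotone there, so the maximum must occur at an event. Your explicit matching of the zeros of the triangle and parallelogram terms to deletion and overlap/release (full overlap) events is in fact a more careful justification of the step the paper handles only implicitly through its configuration-by-configuration sign analysis.
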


\begin{proof}
  Let us consider the area of $\bhullp$ given by
  \Cref{apps:area:eqn:area}. From
  \Cref{apps:area:eqn:triangle_area,apps:area:eqn:or_area}, the area
  of $\bhullp$ can be rewritten as

  \begin{align}
    \label{apps:area:eqn:event_1}
    \area(\bhullp) &= \area(\polygon)
                     - \sum_i \area(\triangles)
                     + \sum_j \area(\parallelograms) \nonumber \\
                   &= \area(\polygon)
                     - \sum_i \left| a_i \pm b_i \cot(\beta) \right|
                     + \sum_j \left| a_j \pm b_j \cot(\beta) \right|.
  \end{align}
  If we consider the different point configurations that define a
  triangle (see \Cref{apps:area:fig:events}), we can express
  $|a_i \pm b_i \cot(\beta)|$ as $a_i + b_i \cot(\beta)$ or
  $a_i - b_i \cot(\beta)$, according to the specific
  configuration. Thus, we have
    \begin{align*}
    \sum_i \area(\triangles) &= \sum_i |a_i \pm b_i\cot(\beta)| \\
                             &= \sum_{i_0} \left( a_{i_0} + b_{i_0}\cot(\beta) \right)
                               + \sum_{i_1} \left( a_{i_1} - b_{i_1}\cot(\beta)\right)= a + b\cot(\beta).
  \end{align*}

  \begin{figure}[ht]
    \centering
    \subcaptionbox{\label{apps:area:fig:events:1}}
    {\includegraphics{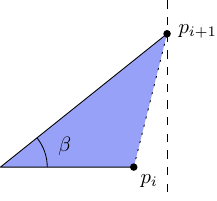}}
    \hspace{1.5cm}
    \subcaptionbox{\label{apps:area:fig:events:2}}
    {\includegraphics{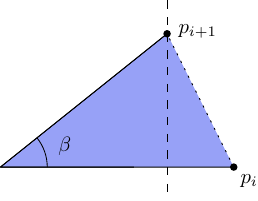}}
    \caption{Relative positions between the vertices of the triangle
      $\triangles$.}
    \label{apps:area:fig:events}
  \end{figure}

  It is possible to make a similar case-by-case analysis for the
  overlapping regions, to obtain from \Cref{apps:area:eqn:or_area} an
  expression with the form $c + d\cot(\beta)$. Within an interval
  between events $P$ does not change, and its area remains
  constant. Therefore, in an interval $[\beta_i,\beta_{i+1})$ we can rewrite:

  \begin{align}
    \label{apps:area:eqn:event_2}
    \area(\bhullp) &= \area(\polygon)
                     - \sum_i |a_j \pm b_j \cot(\beta)|
                     + \sum_j |a_j \pm b_j \cot(\beta)| \\
                   &= \area(\polygon)
                     - \left( a + b\cot(\beta) \right)
                     + \left( c + d\cot(\beta) \right) \nonumber \\
                   &= \area(\polygon)
                     + (c-a) +(d-b)\cot(\beta) \nonumber \\
    \label{apps:area:eqn:event_3}
                   &= A + B\cot(\beta),
  \end{align}
  where $A$ and $B$ contain the sum of all constants from the terms in
  \Cref{apps:area:eqn:event_2}. Note that \Cref{apps:area:eqn:event_3}
  is monotone at any interval $[\beta_i,\beta_{i+1})$, as it is
  monotone in $(0,\pi)$. Depending on the particular values of $A$ and
  $B$, $\area(\bhullp)$ might be non-decreasing or
  non-increasing. Thus, the local maximum is given either by $\beta_i$
  or $\beta_{i+1}$.
\end{proof}

\paragraph{The search algorithm.}

The algorithm to compute the angle of optimum area is outlined as
follows.
\begin{enumerate}
\item \label{apps:area:step_1}Traverse the sequence of events to
  identify the first release event $\beta_d$, and the last
  overlap event $\beta_c$.
  Restrict the sequence to start with $\beta_d$ and finish with $\beta_c$, so that $\mathcal{O}_{\beta}(P)$ has at least one connected component in every interval. Ignored events have no effect in the
  result, as they belong to an initial (increasing or decreasing) configuration, where $\operatorname{area}(\mathcal{O}_\beta\mathcal{H}(P))=0$.
\item \label{apps:area:step_2} At the first interval, compute
$\mathcal{O}_\beta\mathcal{H}(P)$ and using \Cref{apps:area:eqn:area} compute
$\operatorname{area}(\mathcal{O}_\beta\mathcal{H}(P))$, keeping the angle
$\beta_m$ of maximum area.
\item \label{apps:area:step_3}Traverse the sequence of events. At each
  event:
  \begin{enumerate}
  \item \label{apps:area:step_3_1}Update the set of vertices and
    overlapping regions of $\bhullp$.
  \item \label{apps:area:step_3_2}Handle each event updating
    \Cref{apps:area:eqn:area} as explained above.
  \item \label{apps:area:step_3_3}Compute the local angle of maximum
    area. Replace $\beta_m$ only if the area of $\bhullp$ is improved.
  \end{enumerate}
\end{enumerate}

There is a linear number of events in total, so
step~\ref{apps:area:step_1} requires $O(n)$
time. \Cref{apps:area:eqn:area} contains at most a linear number of terms, as
there is at most a linear number of vertices and overlapping
regions. Thus, from \Cref{intro:thm:fixed_computation} and previous
discussions, step~\ref{apps:area:step_2} requires $\Theta(n \log n)$ time
and $O(n)$ space.

From Section~\ref{sec:bhull:sweep}, the updates on step~\ref{apps:area:step_3_1}
require logarithmic time. Every event results in a constant number of modifications
to~\Cref{apps:area:eqn:area}, as we described previously in this
section. From~\Cref{apps:area:lemma:events} we can obtain the angle of
maximum area in constant time. As there is a linear number of events,
step~\ref{apps:area:step_3} requires a total of $O(n\log n)$ time. From
this analysis we obtain the following Theorem, where the lower bound comes
from the maintenance of $\bhullp$.

\begin{theorem}
  Computing the value(s) of $\beta \in (0, \pi)$ for which $\bhullp$
  has maximum area, requires $\Theta(n \log n)$ time and $O(n)$ space.
\end{theorem}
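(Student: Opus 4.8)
The plan is to prove both bounds by analyzing the search algorithm just described, using the structural facts already established. I would first settle correctness. By \Cref{apps:area:lemma:events}, on every interval $[\beta_i,\beta_{i+1})$ between consecutive events the area equals the monotone expression $A+B\cot(\beta)$ of \Cref{apps:area:eqn:event_3}, so its extreme value on that interval is attained at one of the endpoints $\beta_i$ or $\beta_{i+1}$. Hence the global maximum of $\area(\bhullp)$ over $(0,\pi)$ is realized at some event angle, and it suffices for the algorithm to evaluate the area only at events. I would adopt the restriction of step~\ref{apps:area:step_1}, discarding events lying in the initial increasing or decreasing configurations; as argued there these contribute zero area and therefore cannot affect the maximum, so restricting the sweep to $[\beta_d,\beta_c]$ is harmless. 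If all maximizers are wanted, one records ties instead of keeping only the first, at no asymptotic cost.

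For the upper bound I would charge each phase separately. Producing and sorting the event sequence is exactly the maintenance task of Section~\ref{sec:bhull:sweep}: by \Cref{bhull:lemma:linear_events} there are $O(n)$ events, and by \Cref{bhull:lemma:point_events,bhull:lemma:overlap_events} they are computed and merged into one ordered list in $O(n\log n)$ time and $O(n)$ space. Step~\ref{apps:area:step_1} is a single linear scan. Step~\ref{apps:area:step_2} computes $\bhullp$ at the first interval and evaluates \Cref{apps:area:eqn:area}; since that formula has $O(n)$ terms, \Cref{intro:thm:fixed_computation} gives $\Theta(n\log n)$ time and $O(n)$ space. The crux is step~\ref{apps:area:step_3}: I must show each event is handled in $O(\log n)$ time. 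The combinatorial update in step~\ref{apps:area:step_3_1} is $O(\log n)$ by the maintenance result; the update of the three running sums in step~\ref{apps:area:step_3_2} is $O(1)$; and by \Cref{apps:area:lemma:events} the local optimum of $A+B\cot(\beta)$ is read off in $O(1)$ in step~\ref{apps:area:step_3_3}. With $O(n)$ events this totals $O(n\log n)$ time and $O(n)$ space.

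The main obstacle I expect is verifying the constant-time bookkeeping of step~\ref{apps:area:step_3_2}, i.e.\ that $\area(\polygon)$, $\sum_i\area(\triangles)$ and $\sum_j\area(\parallelograms)$ can each be maintained with $O(1)$ work per event. This requires confirming that an insertion or deletion event touches only the constant set of triangles and overlap regions incident to the affected vertex (at most two maximal $\bset$-quadrants per vertex, so two triangles removed and one created, and at most two parallelograms altered), and that an overlap or release event changes a single parallelogram. The delicate point is the sign convention: by the per-term formulas \Cref{apps:area:eqn:triangle_area,apps:area:eqn:or_area} each term is $|a\pm b\cot(\beta)|$, and I must use the configuration case analysis of \Cref{apps:area:fig:events} to resolve each $\pm$ so that the aggregated constants $A$ and $B$ are updated consistently when terms are added or removed.

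Finally, for the lower bound I would observe that $\Omega(n\log n)$ already governs the maintenance of $\bhullp$, and, through \Cref{intro:thm:fixed_computation}, the computation of $\bhullp$ at a single angle, since the convex hull of $P$ is then obtainable in linear time. As any procedure reporting the area-maximizing angle must in particular resolve $\bhullp$ at that angle, the same $\Omega(n\log n)$ bound transfers. Together with the upper bound this yields $\Theta(n\log n)$ time and $O(n)$ space, as claimed.
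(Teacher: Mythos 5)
Your proposal is correct and follows essentially the same route as the paper: the same three-step search algorithm, correctness via \Cref{apps:area:lemma:events} (the area is $A+B\cot(\beta)$, hence monotone, on each inter-event interval), the same per-phase accounting using \Cref{bhull:lemma:linear_events,bhull:lemma:point_events,bhull:lemma:overlap_events,intro:thm:fixed_computation} with $O(\log n)$ combinatorial updates and $O(1)$ term bookkeeping per event, and the same source for the $\Omega(n\log n)$ lower bound (maintenance of $\bhullp$, reducing to the convex hull). Your explicit treatment of the sign resolution in the $|a\pm b\cot(\beta)|$ terms only spells out what the paper handles via the case analysis of \Cref{apps:area:fig:events}, so no substantive difference remains.
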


\subsection{Perimeter optimization.}\label{sec:apps:perimeter}

In this section we solve the following problem:

\begin{problem}[Maximum perimeter]
  Given a set $P$ of $n$ points in the plane, compute the value of
  $\beta$ for which $\bhullp$ has maximum perimeter.
\end{problem}

\noindent The perimeter of $\bhullp$ is given by
\begin{equation}
  \label{apps:perim:eqn:perim}
  \perim(\bhullp) = \sum_i \perim(\angle_i(\beta))
  - \sum_j \perim(\parallelograms)
  - \sum_k \perim(\diagdown_k(\beta)),
\end{equation}
where the $\angle_i(\beta)$ and the $\parallelograms$ denote the \emph{steps}
 and parallelograms, respectively, defined
by the staircases, and $\diagdown_k$ denotes one of the (at most four)
\emph{antennas} of $\bhullp$, that is, a segment of an
$\bset$-staircase bounding a zero-area region of $\bhullp$.
See again~\Cref{apps:area:fig:area}.

The same approach, and most of the arguments we used to maximize the
area can be applied here. Following the same ideas, we will first
analyze the computation and maintenance of~\Cref{apps:perim:eqn:perim},
we then present adaptations of~\crefrange{apps:area:lemma:max_angle}{apps:area:lemma:events}, and
finalize outlining the search algorithm.

\paragraph{The steps $\angle_i(\beta)$.}

Considering a top-right $\bset$-staircase (see again
\Cref{apps:area:fig:triangle}), the perimeter of $\angle_i(\beta)$ is given
by \Cref{apps:perim:eqn:step}, where $p_i=(x_i,y_i)$ and
$p_{i+1}=(x_{i+1},y_{i+1})$ are the points supporting the $i$-th
step. Vertices over the staircase have non-decreasing $y$ coordinates,
so $a_i$ is always positive. Event handling is done in the same way as
we did with triangles in the previous section.

\begin{align}
  \label{apps:perim:eqn:step}
  \perim(\angle_i(\beta)) &= \left|
                      \left( y_{i+1}-y_i \right)\cot(\beta)
                      + \left( y_{i+1}-y_i \right)\csc(\beta)
                      + \left( x_i-x_{i+1} \right)
                      \right| \nonumber \\
                    &= \left|
                      a_i \left( \cot(\beta) + \csc(\beta) \right)
                      \pm  b_i
                      \right|.
\end{align}

\paragraph{The overlapping regions $\parallelograms$.}

If we consider top-right and bottom-left $\bset$-stair-cases
intersecting as shown in \Cref{apps:area:fig:parallelogram}, the
perimeter of an overlapping region is given by
\Cref{apps:perim:eqn:or}. The constants $c_j$ and $d_j$ are always
positive. Event handling is done in the same way as we handled
overlapping regions to optimize the area of $\bhullp$.

\begin{align}
  \label{apps:perim:eqn:or}
  \perim(\parallelograms) &= \left| 2(y_{i+1} - y_{k+1}) \cot(\beta)
                           + 2(y_{k} - y_{i})\csc(\beta)
                           - (x_{i+1} - x_{k+1}) \right| \nonumber \\
                         &= \left| c_j \cot(\beta)
                           + d_j \csc(\beta)
                           \pm e_j \right|.
\end{align}

\paragraph{The antennas $\diagdown_k(\beta)$.}

An antenna is a semistep at one of the extremes of an
$\bset$-staircase. Just as steps and triangles, an antenna is defined
by two consecutive $\bset$-maximal points. If we consider a top-right
$\bset$-staircase, the perimeter of an antenna is given by
\Cref{apps:perim:eqn:antenna:1} if it is the first semistep of the
staircase, and by \Cref{apps:perim:eqn:antenna:2} if it is the last one
(see \Cref{apps:perim:fig:antennas}). In both equations we consider
$p_i=(x_i,y_i)$ to be the point supporting the corresponding
semistep. The constant $f_k$ is always positive.

\begin{align}
  \label{apps:perim:eqn:antenna:1}
  \perim_f(\diagdown_k) &= \left|
                       \left( y_{i+1}-y_i \right)\cot(\beta)
                       + \left( x_i-x_{i+1} \right)
                       \right| \nonumber \\
                     &= \left|
                       f_k \cot(\beta) \pm g_k
                       \right|\\[1em]
  \label{apps:perim:eqn:antenna:2}
  \perim_l(\diagdown_k) &= \left( y_{i+1}-y_i \right)\csc(\beta) \nonumber \\
                     &= f_k \cot(\beta)
\end{align}

\begin{figure}[ht]
  \centering
  \begin{minipage}{0.9\textwidth}
    \centering
    {\includegraphics{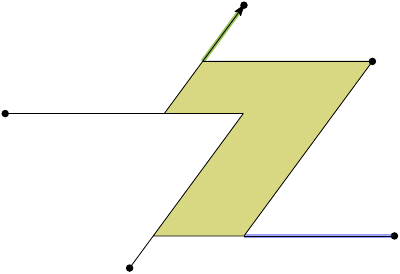}}
    \caption{Two antennas respectively, at the first (horizontal
      segment) and last (non-horizontal segment) semisteps of the
      top-right $\bset$-staircase.}
    \label{apps:perim:fig:antennas}
  \end{minipage}
\end{figure}

Considering the case-by-case analysis we did in the previous section,
we can rewrite Equations~\ref{apps:perim:eqn:step} to
\ref{apps:perim:eqn:antenna:2} as

\begin{align}
  \label{apps:perim:eqn:antenna:3}
  \sum_i \perim(\angle_i(\beta)) &= \sum_i \left|
                                   a_i \cot(\beta) + a_i \csc(\beta)
                                   \pm  b_i
                                   \right| \nonumber \\
                                 &= \sum_{i_0}
                                   a_{i_0} \cot(\beta) + a_{i_0} \csc(\beta) + b_{i_0}
                                   +  \sum_{i_1}
                                   a_{i_1} \cot(\beta) + a_{i_1} \csc(\beta) - b_{i_1} \nonumber \\
                                 &= a \cot(\beta) + a \csc(\beta) + b
\end{align}

\begin{align}
  \label{apps:perim:eqn:antenna:4}
  \sum_j \perim(\parallelograms) &= \sum_j \left|
                                   c_j \cot(\beta)
                                   + d_j \csc(\beta)
                                   \pm e_j \right| \nonumber \\
                                 &= \sum_{j_0}
                                   c_{j_0} \cot(\beta) + d_{j_0} \csc(\beta) + e_{j_0}
                                   +  \sum_{j_1}
                                   c_{j_1} \cot(\beta) + d_{j_1} \csc(\beta) - e_{j_1} \nonumber \\
                                 &= c \cot(\beta) + d \csc(\beta) + e
\end{align}

\begin{align}
  \label{apps:perim:eqn:antenna:5}
  \sum_k \perim_l(\diagdown_k) &= \left|
                                 f_k \cot(\beta) \pm g_k
                                 \right| \nonumber \\
                               &= \sum_{k_0}
                                 f_{k_0} \cot(\beta) + g_{k_0} + \sum_{k_1}
                                 f_{k_1} \cot(\beta) - g_{k_1} \nonumber \\
                               &= f \cot(\beta) + g,
 \end{align}

 and use
 \Crefrange{apps:perim:eqn:antenna:3}{apps:perim:eqn:antenna:5} to
 rewrite \Cref{apps:perim:eqn:perim} as

 \begin{align}
   \label{apps:perim:eqn:antenna:6}
   \perim(\bhullp) &= \sum_i \perim(\angle_i(\beta))
                     - \sum_j \perim(\parallelograms)
                     - \sum_k \perim(\diagdown_k(\beta)) \nonumber \\
                   &= \left( a + c + f \right) \cot(\beta)
                     + \left( a + d \right) \csc(\beta)
                     + \left( b + e + g \right) \nonumber \\
                   &= A \cot(\beta) + B \csc(\beta) + C.
 \end{align}

 Note that all the constants in \Cref{apps:perim:eqn:antenna:6} adding
 up to $A$ and $B$ are always positive, so $A,B > 0$. Moreover, within
 an interval there are at most four antennas, as they contain one of
 the left-most, right-most, top-most, and bottom-most points in $P$
 (see again \Cref{apps:perim:fig:antennas}). Therefore, the number of
 terms contributed by antennas to \Cref{apps:perim:eqn:perim} is
 constant and, except for $C$, they do not modify the original signs
 of any other term.

 For simplicity, we will avoid antennas in the optimization of the
 perimeter of $\bhullp$, by using a version of
 \Cref{apps:perim:eqn:perim} not containing the term
 $\sum_k \perim(\diagdown_k(\beta))$. From the discussion above, both
 expressions have maxima at the same values of $\beta$.

\paragraph{Characterization.}

We next answer questions about the behavior of $\perim(\bhullp)$,
similar to the ones answered with
\crefrange{apps:area:lemma:max_angle}{apps:area:lemma:events} in
\Cref{sec:applications}. Specifically, we show that the angle of
maximum perimeter corresponds to an event
(\Cref{apps:perim:lemma:events}) and, other than that, no restriction
on the candidate angles seems to be possible
(Lemmas~\ref{apps:perim:lemma:max_angle} and \ref{apps:perim:lemma:bimodal}).

\begin{lemma}\label{apps:perim:lemma:max_angle}
  For any $\beta_0 \in (0,\pi)$, there exists a point set such that
  \[\max_{\beta} \perim(\bhullp)\neq\perim(\bhullp[P][\beta_0]).\]
\end{lemma}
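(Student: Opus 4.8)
The statement is the perimeter analogue of Lemma~\ref{apps:area:lemma:max_angle}, and the natural plan is to mirror that earlier construction. The goal is: given an arbitrary target angle $\beta_0 \in (0,\pi)$, exhibit a point set $P$ whose $\bhull$ does \emph{not} achieve its maximum perimeter at $\beta_0$. The cleanest way to guarantee this is to force $\perim(\bhullp[P][\beta_0])$ to be strictly smaller than $\perim(\bhullp[P][\beta_1])$ for some other angle $\beta_1$. Following the area proof, I would set up a coordinate system aligned with $\bset[\beta_0]$ and place a small number of points so that, at $\beta = \beta_0$, the $\bhull$ degenerates: I want $P$ arranged so that at $\beta_0$ every point lies on a single $\bset$-line (or the points are collinear along a direction dictated by $\beta_0$), forcing $\bhullp[P][\beta_0]$ to be a zero-area, minimal-perimeter object (ideally a single segment or a point), while for angles on one side of $\beta_0$ the hull opens up into a genuinely two-dimensional region with strictly larger perimeter.

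Concretely, I would place three points on the three semiaxes of the $\bset[\beta_0]$-coordinate system and one point in the quadrant that makes the top-right and bottom-left staircases coincide at $\beta_0$, exactly as in the area lemma. First I would verify that at $\beta = \beta_0$ the staircases collapse so that $\perim(\bhullp[P][\beta_0])$ equals the perimeter of a degenerate (zero-area) configuration. Then I would check that for angles $\beta_1$ slightly past $\beta_0$, a nondegenerate overlapping region (a parallelogram of positive area) appears, so that the perimeter strictly exceeds the degenerate value. Since \Cref{apps:perim:eqn:antenna:6} writes the perimeter as $A\cot(\beta) + B\csc(\beta) + C$ with $A,B>0$ on each interval, and these coefficients change across the event at $\beta_0$, monotonicity on each side lets me conclude that $\beta_0$ is not a maximizer: the perimeter is strictly increasing (or decreasing) away from $\beta_0$ on at least one side.

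The main obstacle is ensuring the degeneracy is genuinely a strict local minimum of the perimeter at $\beta_0$, not merely a point where the area vanishes. Unlike area, perimeter need not be zero at a degenerate (collinear) configuration—a segment still has positive length—so I cannot simply invoke ``$\perim = 0$'' the way the area proof invokes ``$\area = 0$.'' I would therefore have to argue directly that the perimeter of the collapsed hull at $\beta_0$ is strictly less than the perimeter at a nearby opening angle, which amounts to a short explicit comparison using \Cref{apps:perim:eqn:antenna:6}: evaluate the limiting perimeter as $\beta \to \beta_0$ from the side where the overlapping region exists, and show it exceeds the value at $\beta_0$ itself by the positive contribution of the newly created parallelogram's boundary. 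Once this strict inequality is established, the conclusion $\max_\beta \perim(\bhullp) \neq \perim(\bhullp[P][\beta_0])$ is immediate. I expect the figure-driven geometric verification of the staircase collapse and reopening to be routine but the place where care is genuinely required is this strictness argument for the perimeter.
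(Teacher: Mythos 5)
There is a genuine gap, and it sits exactly where you flagged it. Your plan reuses the four-point configuration from \Cref{apps:area:lemma:max_angle} and defers the decisive step --- that the degenerate hull at $\beta_0$ has strictly smaller perimeter than a nearby nondegenerate one --- to a ``direct comparison'' that you never carry out, and on that configuration it is not routine. At $\beta = \beta_0$ the two points placed on the $y^+$- and $y^-$-semiaxes of the $\bset[\beta_0]$-system lie on a common line of slope $\tan(\beta_0)$, so the collapsed hull is not a finite set of points: it contains one-dimensional pieces of positive length, and what $\perim$ assigns to them is convention-dependent --- the paper's \Cref{apps:perim:eqn:perim} \emph{subtracts} antennas (zero-area pieces), and it presumes at most four of them, which is unclear in a fully degenerate configuration. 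Moreover, the representation $A\cot(\beta)+B\csc(\beta)+C$ of \Cref{apps:perim:eqn:antenna:6}, which you propose to evaluate in the limit, is derived inside an interval between events with fixed combinatorial structure; it cannot be evaluated ``at'' the degenerate angle $\beta_0$, which is itself an event. Your auxiliary monotonicity claim is also false in general: by the case analysis in the proof of \Cref{apps:perim:lemma:events}, when $A<B$ the derivative vanishes at $\cos(\beta)=-A/B$, so the perimeter need not be monotone on either side of $\beta_0$.

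The paper sidesteps all of this by changing the point set rather than sharpening the estimate: it places \emph{three} points that form a strictly decreasing chain with respect to the $\bset[\beta_0]$-axes --- one at the origin, one in the second quadrant, one in the fourth. Monotonicity forces $\bhullp[P][\beta_0]=P$, i.e.\ the hull degenerates to three isolated points with no segments whatsoever, so its perimeter is unambiguously zero, and it remains zero for every $\beta\le\beta_0$; for some $\beta_1>\beta_0$ the hull becomes two-dimensional and $\perim(\bhullp[P][\beta_1])\neq 0$. This restores exactly the clean ``$\perim=0$ versus $\perim\neq 0$'' dichotomy that you correctly observed was unavailable for the area-lemma configuration. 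So the missing idea is not a strictness computation but a different construction; as written, your fix via \Cref{apps:perim:eqn:antenna:6} does not go through.
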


\begin{proof}
  Consider the coordinate system formed by $\bset[\beta_0]$. Place one
  point on the origin, and a point on the second and fourth
  quadrants (\Cref{apps:perim:fig:max_angle:1}). As the set of points
  is monotone with respect of the $x$- and $y$-axes,
  $\bhullp[P][\beta_0] = P$. Therefore, $\perim(\bhullp[P][\beta_0])$
  is equal to zero.

  \begin{figure}[ht]
    \centering \subcaptionbox{\label{apps:perim:fig:max_angle:1}}
    {\includegraphics[scale=1.2]{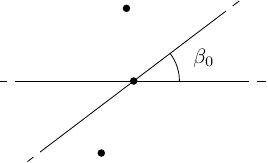}}
    \hspace{1cm}
    \subcaptionbox{\label{apps:perim:fig:max_angle:2}}
    {\includegraphics[scale=1.2]{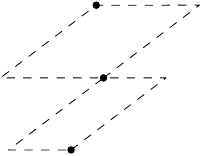}}
    \hspace{1cm}
    \subcaptionbox{\label{apps:perim:fig:max_angle:3}}
    {\includegraphics[scale=1.2]{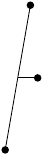}}
    \caption{Lemma
      \ref{apps:perim:lemma:max_angle}. \subref{apps:perim:fig:max_angle:1}
      A set of points. \subref{apps:perim:fig:max_angle:2}
      $\perim(\bhullp) = 0$ for $\beta \leq \beta_0$.
      \subref{apps:perim:fig:max_angle:3}
      $\perim(\bhullp[P][\beta_0]) \neq 0$ for some
      $\beta > \beta_0$.}
    \label{apps:perim:fig:max_angle}
  \end{figure}

  From this position, note that $\perim(\bhullp) = 0$ for any
  $\beta \leq \beta_0$ (\Cref{apps:perim:fig:max_angle:2}), and there
  exists at least one $\beta_1 > \beta_0$ such that
  $\perim(\bhullp[P][\beta_1]) \neq 0$
  (\Cref{apps:perim:fig:max_angle:3}). Clearly, $\beta_0$ is not the
  angle of maximum perimeter.
\end{proof}

\begin{lemma}\label{apps:perim:lemma:bimodal}
  For any $\beta_0,\beta_1 \in (0,\pi)$, there exists a point set $P$ for which $\area(\mathcal{O}_{\beta}(P))$ has local maxima in $\beta_0$ and $\beta_1$.
\end{lemma}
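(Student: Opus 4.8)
The plan is to mirror the construction of \Cref{apps:area:lemma:bimodal}: assuming $\beta_0<\beta_1$, I would place the four points $p_l,p_r,p_t,p_c$ exactly as there, so that throughout the relevant part of the sweep the set $\bhullp$ is a single overlapping region $\parallelograms$ of constant vertical height $h$, and so that the events at which the vertices supporting $\parallelograms$ change occur precisely at $\beta_0$, at an intermediate angle $\beta_{ct}$, and at $\beta_1$. (For the statement exactly as written, with $\area$, this is literally \Cref{apps:area:lemma:bimodal}; what genuinely needs a separate argument is the perimeter.) Since the hull is a single parallelogram, $\perim(\bhullp)$ equals $\perim(\parallelograms)$ up to the constantly many antenna terms already shown to be irrelevant, so I can work directly with \Cref{apps:perim:eqn:or}.

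Next I would show that every local maximum must sit at an event. On each interval between consecutive events the supporting vertices are fixed, so by \Cref{apps:perim:eqn:or} the perimeter equals $c\cot(\beta)+d\csc(\beta)+e$ with $d=2h>0$; its derivative is $-(c+d\cos\beta)/\sin^2\beta$, which vanishes at most once and, because $d>0$, changes sign only from $-$ to $+$, giving an interior minimum and never an interior maximum. This is the perimeter analogue of \Cref{apps:area:lemma:events}, and it forces every local maximum of $\perim(\bhullp)$ onto a kink, i.e. onto an event. It then remains to verify that $\beta_0$ and $\beta_1$ are genuine local maxima: the piece to the left of $\beta_0$ must be increasing at $\beta_0^-$ while the piece to its right is decreasing at $\beta_0^+$ (symmetrically at $\beta_1$, with a minimum at $\beta_{ct}$). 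Writing the base of $\parallelograms$ as $(x_c-x_l)+(y_l-y_c)\cot(\beta)$, its slope coefficient $y_c-y_l$ flips sign as the supporting vertices switch at $\beta_0$, which is exactly what produces the tent shape.

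The hard part is the pair of slant sides. Unlike the area, where each piece is $A+B\cot(\beta)$ and the base alone governs monotonicity, here the slant sides contribute $2h\csc(\beta)$, whose derivative $-2h\csc(\beta)\cot(\beta)$ competes with the base. On each piece the sign of the perimeter's derivative is the sign of $(y_c-y_l)-h\cos(\beta)$, so the kink at $\beta_0$ is a true maximum only when the jump in the base's slope dominates the slant correction there; and since $p_c$ is interior one has $y_c-y_l<h$, so for $\cos(\beta_0)$ near $1$ the correction can reverse the desired sign. Because the triangle bounded by the $x$-axis, $\ell_0$ and $\ell_1$ is determined up to similarity by $\beta_0,\beta_1$, this cannot be cured by scaling. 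I would therefore focus the real work on the placement itself --- choosing which points carry the two horizontal lines so that the relevant gap $h$ is small compared with $y_c-y_l$, and, if needed, keeping $\beta_0,\beta_1$ on the side of $\pi/2$ where $\cos(\beta)$ has the favourable sign --- so that $(y_c-y_l)-h\cos(\beta)$ keeps the correct sign on both sides of each event. Proving that such a placement exists for every prescribed pair $\beta_0<\beta_1$ is, I expect, the only real obstacle; the rest is the bookkeeping already carried out for the area.
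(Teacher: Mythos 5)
You correctly diagnosed the printed ``$\area$'' as a typo for $\perim$ (the lemma sits in the perimeter section), and your interior-extrema reduction is sound and even slightly cleaner than the paper's: on an interval between events the perimeter is $c\cot(\beta)+d\csc(\beta)+e$ with $d>0$, its derivative $-\csc^{2}(\beta)\left(c+d\cos(\beta)\right)$ changes sign at most once and only from negative to positive, so interior critical points are minima and all local maxima must sit at events, which is the content of \Cref{apps:perim:lemma:events}. But your proof has a genuine gap, and you flag it yourself: reusing the parallelogram configuration of \Cref{apps:area:lemma:bimodal}, the slant sides contribute a $\csc(\beta)$ term whose derivative competes with the kink of the base at $\beta_0$ and $\beta_1$, and you never establish that, for \emph{every} prescribed pair $\beta_0<\beta_1$, the points can be placed so that the kink dominates on both sides of both events. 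Your own sign analysis shows why this is not mere bookkeeping: the sign condition you derive can fail when the relevant $\cos(\beta)$ is close to $\pm 1$, the triangle bounded by the $x$-axis, $\ell_0$ and $\ell_1$ is rigid up to similarity so scaling buys nothing, and the only remaining freedom is the position of the interior point $p_c$, for which no existence argument is given. As submitted, the decisive step of the proof is a conjecture.

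The paper's proof avoids this competition entirely by abandoning the area construction. It places three points $p_l,p_r,p_t$ at the corners of an acute triangle whose slanted sides have slopes $\tan(\beta_{tl})$ and $\tan(\beta_{tr})$; for $\beta$ between these two event angles the perimeter is a constant (the fixed base $p_lp_r$) plus a single segment through $p_t$ of constant vertical extent, hence of length proportional to $\csc(\beta)$. This function is automatically $U$-shaped with its minimum at $\pi/2$ and one-sided local maxima at the events $\beta_{tl}$ and $\beta_{tr}$: no $\cot(\beta)$ term appears, so nothing needs to be dominated. Since this single-triangle argument forces $\beta_{tl}\leq\pi/2<\beta_{tr}$, arbitrary pairs are handled by gluing a mirrored second triangle below the $x$-axis (one extra point $p_b$), producing maxima at four angles $\beta_{tl}\leq\pi/2<\beta_{tr}$ and $\beta_{bl}>\pi/2\geq\beta_{br}$, after which a three-case assignment ($\beta_0,\beta_1$ both below $\pi/2$, both above, or straddling) places $\beta_0$ and $\beta_1$ among them. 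To repair your route you would have to prove your conjectured placement exists --- doubtful precisely for the pairs the case analysis most needs --- or, more simply, switch to a configuration like the paper's in which the perimeter between the target events carries no competing $\cot(\beta)$ term.
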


\begin{proof}
  \newcommand{\topt}{\triangle_t}
  \newcommand{\topal}{\beta_{tl}}
  \newcommand{\topar}{\beta_{tr}}
  \newcommand{\bott}{\triangle_b}
  \newcommand{\botal}{\beta_{bl}}
  \newcommand{\botar}{\beta_{br}}

  Let $\topt$ be an acute triangle bounded by the $x$-axis, and two
  lines $\ell_{tl},\ell_{tr}$ with slopes $\tan(\topal)$ and
  $\tan(\topar)$, respectively. Without loss of generality, we assume
  that $\topal < \topar$, and the intersection point between
  $\ell_{tl}$ and $\ell_{tr}$ lies on the $y^+$-semiplane.

  Let us consider the set $P' = \{ p_l,p_r,p_t \}$ of points located
  respectively, over the left, right, and top vertices of
  $\topt$. Note that, at any starting position, the perimeter of
  $\bhullp[P']$ is constant and equal to the base of $\topt$. Using an
  increasing sweep, from $\topal$ to $\topar$ the perimeter is formed
  additionally by a line segment $\ell_{t,b}$ joining $p_t$, and a
  point $p_b$ traversing the base of $\topt$ from $p_l$ to
  $p_r$. During this interval, both $\ell_{t,b}$ and the perimeter of
  $\bhullp[P']$ increase or decrease together as $\beta$ changes. On
  this conditions, the perimeter of $\bhullp[P']$ has a local minimum
  on $\beta = \frac{\pi}{2}$ and thus, a local maximum on $\topal$
  ($\perim(\bhullp[P'][\topal]) >
  \perim(\bhullp[P'][\topal-\varepsilon])$),
  and a second local maximum on $\topar$
  ($\perim(\bhullp[P'][\topar]) > \perim(\bhullp[P'][\topar +
  \varepsilon])$). See \Cref{apps:perim:fig:bimodal_1}.

  \begin{figure}[ht]
    \centering
    \subcaptionbox{\label{apps:perim:fig:bimodal_1:1}}
    {\includegraphics[scale=1.2]{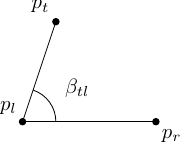}}
    \hspace{1cm}
    \subcaptionbox{\label{apps:perim:fig:bimodal_1:2}}
    {\includegraphics[scale=1.2]{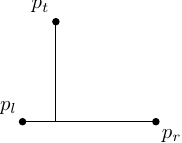}}
    \hspace{1cm}
    \subcaptionbox{\label{apps:perim:fig:bimodal_1:3}}
    {\includegraphics[scale=1.2]{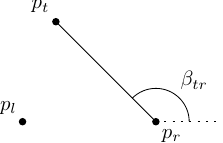}}
    \caption{\subref{apps:perim:fig:bimodal_1:1} and
      \subref{apps:perim:fig:bimodal_1:3} Maxima on $\beta_{tl}$ and
      $\beta_{tr}$. \subref{apps:perim:fig:bimodal_1:2} A minima on
      $\frac{\pi}{2}$.}
    \label{apps:perim:fig:bimodal_1}
  \end{figure}

  Let $\bott$ be a second acute triangle bounded by the $x$-axis, and
  a second pair of lines $\ell_{bl},\ell_{br}$ with slopes
  $\tan(\botal)$ and $\tan(\botar)$ that pass through $p_l$ and $p_r$,
  respectively. The angles are such that $\botal > \botar$, and the
  intersection point $p_b$ between $\ell_{bl}$ and $\ell_{br}$ lie on
  the $Y^-$ semiplane. Note that, if we add $p_b$ to the set $P'$, the
  arguments from the above discussion hold for both $\topt$ and
  $\bott$, so the perimeter of $\bhullp[P']$ has now local maxima on
  $\topal, \topar, \botal$, and
  $\botar$. See \Cref{apps:perim:fig:bimodal_2}.

  \begin{figure}[ht]
    \centering
    \begin{minipage}{0.9\textwidth}
      \centering
      \includegraphics[scale=1.2]{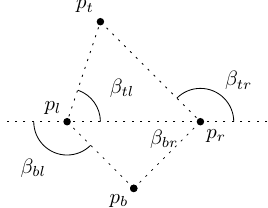}
      \caption{The angles with local maxima.}
      \label{apps:perim:fig:bimodal_2}
    \end{minipage}
  \end{figure}

  Given the angles $\beta_0$ and $\beta_1$, construct the previous
  point set as explained. In this construction,
  $\topal \leq \frac{\pi}{2} < \topar$ and
  $\botal > \frac{\pi}{2} \geq \botar$. Set two of
  $\topal, \topar, \botal, \botar$ to the values of $\beta_0$ and
  $\beta_1$ appropriately, according to the cases
  \begin{inparaenum}[\itshape i\upshape)]
  \item $\beta_0,\beta_1 < \frac{\pi}{2}$,
  \item $\beta_0,\beta_1 > \frac{\pi}{2}$, and
  \item $\beta_0 \leq \frac{\pi}{2} < \beta_1$ or
    viceversa\end{inparaenum}. The perimeter of $\bhullp[P']$ will
  have local maxima at $\beta_0$ and $\beta_1$.
\end{proof}

\begin{lemma}\label{apps:perim:lemma:events}
  The perimeter of $\bhullp$ reaches its maximum at values of $\beta$
  corresponding to sequence events.
\end{lemma}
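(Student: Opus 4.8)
The plan is to follow the proof of \Cref{apps:area:lemma:events} almost verbatim, the only genuine difference being that the closed form of the perimeter within an interval is no longer monotone. I would start from \Cref{apps:perim:eqn:antenna:6}: since the combinatorial structure of $\bhullp$ is constant on any interval $[\beta_i,\beta_{i+1})$ between consecutive events, the perimeter there equals $A\cot(\beta)+B\csc(\beta)+C$, where $A,B>0$ are the fixed constants assembled in \Crefrange{apps:perim:eqn:antenna:3}{apps:perim:eqn:antenna:5} (with the antenna term dropped, as justified above). It then suffices to show that this single function attains its maximum on the closed interval $[\beta_i,\beta_{i+1}]$ at one of the two endpoints; taking the best endpoint over all intervals yields the claim, since each endpoint is an event.

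First I would differentiate, obtaining
\[
\frac{d}{d\beta}\!\left(A\cot(\beta)+B\csc(\beta)+C\right)
= -A\csc^2(\beta)-B\csc(\beta)\cot(\beta)
= -\frac{A+B\cos(\beta)}{\sin^2(\beta)}.
\]
Because $\sin^2(\beta)>0$ on $(0,\pi)$, the sign of this derivative is exactly the sign of $-(A+B\cos(\beta))$. The key observation is then that $g(\beta)=A+B\cos(\beta)$ is strictly decreasing on $(0,\pi)$ (as $B>0$ and $\cos$ decreases there), running from $A+B$ at $\beta\to 0^+$ down to $A-B$ at $\beta\to\pi^-$; hence $g$ has at most one zero in $(0,\pi)$, so the derivative changes sign at most once, and when it does, only from negative to positive.

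From here I would split into two cases. If $A\ge B$, then $g(\beta)\ge A-B\ge 0$ throughout, the derivative is non-positive, the perimeter is non-increasing, and its maximum on $[\beta_i,\beta_{i+1}]$ is attained at the left endpoint $\beta_i$. If $A<B$, then $g$ is first positive and then negative, so the perimeter first decreases and then increases: its unique interior critical point is a local \emph{minimum}, and the maximum on the interval is again attained at an endpoint. In both cases the maximum over $[\beta_i,\beta_{i+1}]$ occurs at an event, so the global maximum of $\perim(\bhullp)$ over $(0,\pi)$ occurs at an event.

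The only place the argument departs from the area case is precisely this sign analysis, and that is where I expect the main (though mild) obstacle to lie. For the area, $A+B\cot(\beta)$ is strictly monotone, so ``max at an endpoint'' is immediate; for the perimeter, $A\cot(\beta)+B\csc(\beta)+C$ may carry an interior critical point, and the crux is to rule out that this point is a maximum. The monotonicity of $A+B\cos(\beta)$ is exactly what buys this: a single sign change of the derivative from $-$ to $+$ forces any interior critical point to be a minimum, which keeps the maximum pinned at the endpoints. Here it is essential that $A,B>0$, as already noted after \Cref{apps:perim:eqn:antenna:6}, since the sign of the constants is what guarantees $g$ is monotone and crosses zero at most once.
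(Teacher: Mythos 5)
Your proof is correct and takes essentially the same route as the paper's: starting from \Cref{apps:perim:eqn:antenna:6}, differentiating $A\cot(\beta)+B\csc(\beta)+C$, and case-splitting on the sign of $A+B\cos(\beta)$ (the paper's cases $A>B$, $A=B$, $A<B$, which you merge into $A\ge B$ versus $A<B$) to conclude that the maximum on each interval between events lies at an endpoint, hence at an event. If anything, your write-up is slightly more careful than the paper's: you explicitly verify that the unique interior critical point in the case $A<B$ is a local \emph{minimum}, whereas the paper loosely calls it an ``inflection point'' and simply asserts that the extrema sit at the interval endpoints.
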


\begin{proof}
  From \Cref{apps:perim:eqn:antenna:6} we know that the perimeter of
  $\bhullp$ is given by

  \begin{equation*}
    \perim(\bhullp) = A \cot(\beta) + B \csc(\beta) + C,
  \end{equation*}
  where $A,B \geq 0$. Looking for critical points in this expression,
  we arrive to
  \begin{equation}\label{apps:perim:eqn:derivative}
    \cos (\beta) = - \frac{A}{B},
  \end{equation}
  where $\beta \neq 0,\pi$.
  By analyzing the possible roots in \Cref{apps:perim:eqn:derivative},
  we deal with the following cases:
  \begin{enumerate}
  \item $A > B$. There are no roots in this case, as
    $\frac{A}{B} > 1$. The length of the perimeter is monotonic in an interval
    between events.
  \item $A = B$. There are again no roots in this case, as $\beta$
    cannot be $0$ nor $\pi$. The length of the perimeter is again monotonic in an
    interval between events.
  \item $A < B$. There is one root at
    $\beta = \cos^{-1}(-\frac{A}{B})$, as $A$ and $B$ are always
    positive and different from zero. In an interval between events we
    have one inflection point, so there are either two local maxima
    or two local minima, located at the endpoints of the interval.
  \end{enumerate}
\end{proof}

\paragraph{The search algorithm.}

We look for the maximum perimeter angles in the same way as we
obtained the values for maximum area. We first compute the list of
events, obtain the maximum perimeter angle for the first interval
between events, and repeat the procedure for the remaining
events. While traversing the event list, we update the optimum value
angle only if the previous value is improved. A similar complexity
analysis is also valid.

\begin{theorem}
  Computing the value(s) of $\beta \in (0, \pi)$ for which $\bhullp$
  has maximum perimeter takes $O(n\log n)$ time and $O(n)$
  space.
\end{theorem}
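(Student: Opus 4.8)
The plan is to mirror, essentially verbatim, the algorithm and analysis developed for the maximum-area problem in \Cref{sec:apps:area}, replacing the area expression \Cref{apps:area:eqn:area} by the perimeter expression \Cref{apps:perim:eqn:perim} and invoking \Cref{apps:perim:lemma:events} in place of \Cref{apps:area:lemma:events}. First I would compute the full sequence of $O(n)$ vertex and overlap events together with the combinatorial structure of $\bhullp$ at the first interval, using the maintenance machinery of \Cref{sec:bhull:sweep}; by that result the events cost $O(n\log n)$ time and $O(n)$ space, and by \Cref{intro:thm:fixed_computation} the initial $\bhullp$ is obtained within the same bounds.

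Next I would instantiate \Cref{apps:perim:eqn:perim} over the first interval between events, working with the simplified version that omits the antenna term $\sum_k \perim(\diagdown_k(\beta))$; as argued above this does not move the maximizing angle, and by \Cref{apps:perim:eqn:antenna:6} the perimeter reduces on each interval to $A\cot(\beta)+B\csc(\beta)+C$ with $A,B\geq 0$. I would then traverse the event list and, at each event, \emph{(i)} update the vertices and overlapping regions of $\bhullp$, \emph{(ii)} update the constants $A,B,C$, and \emph{(iii)} evaluate the candidate optimum for the current interval, keeping the global best perimeter seen so far.

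The correctness of the sweep rests on \Cref{apps:perim:lemma:events}: since $A\cot(\beta)+B\csc(\beta)+C$ has at most one critical point in $(0,\pi)$ (the root of \Cref{apps:perim:eqn:derivative}), and in the only case where that point exists it is a minimum, the maximum over any interval $[\beta_i,\beta_{i+1})$ is attained at an endpoint, i.e.\ at an event angle. Hence it suffices to evaluate the perimeter at the event angles, which step \emph{(iii)} does in $O(1)$ time; continuity of $\perim(\bhullp)$ across events guarantees that the before- and after-constants agree at each shared endpoint, so no candidate is missed. \Cref{apps:perim:lemma:max_angle,apps:perim:lemma:bimodal} show that no coarser candidate set would work, so traversing all events is genuinely necessary.

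For the running time, step \emph{(i)} costs $O(\log n)$ per event by the analysis of \Cref{sec:bhull:sweep}; step \emph{(iii)} is $O(1)$ by the previous paragraph; and step \emph{(ii)} is $O(1)$ because each event—insertion, deletion, overlap, or release—modifies only a constant number of steps and parallelograms (and at most the constantly many antennas, which are ignored), so at most constantly many signed terms of the forms $a_i\cot(\beta)+a_i\csc(\beta)\pm b_i$ and $c_j\cot(\beta)+d_j\csc(\beta)\pm e_j$ enter or leave the sums defining $A,B,C$. Summing over the $O(n)$ events and adding the $O(n\log n)$ setup yields $O(n\log n)$ time and $O(n)$ space, establishing the claimed upper bound. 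The one step requiring care is the incremental bookkeeping in \emph{(ii)}: one must check that the sign resolution behind \Crefrange{apps:perim:eqn:antenna:3}{apps:perim:eqn:antenna:5} is stable under each event type, so that $A,B,C$ can be maintained by constant-time local updates rather than recomputed from scratch. This is precisely the per-event locality already verified for the area terms in \Cref{sec:apps:area}, transferred to steps, parallelograms, and antennas, so no new difficulty arises.
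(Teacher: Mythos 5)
Your proposal is correct and follows essentially the same route as the paper, whose own proof simply reuses the maximum-area machinery: compute the event list, solve the first interval, and update the running optimum at each of the $O(n)$ events, with the same complexity analysis. In fact you make explicit two points the paper leaves implicit---the constant-time maintenance of the coefficients $A,B,C$ across events and the classification of the unique critical point of $A\cot(\beta)+B\csc(\beta)+C$ as a minimum---so nothing is missing.
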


\section{Concluding remarks}\label{sec:conclusions}

We presented an algorithm to maintain the $\bhull$ of a planar point
set while $\beta$ runs from $0$ to $\pi$ and extended this result to
solve related optimization problems. We considered the maximization of
the area and the perimeter of $\bhullp$, and presented a variation of
the $2$-fitting problem studied in~\cite{fitting_2011}.~In our
version, the fitting curve is an alternating polygonal chain with
segments forming an angle $\beta$.

A natural extension of this work is to replace $\bset$ with a set
$\mathcal{O}$ containing more than two lines.~Different variations can
be obtained by restricting the orientations and (or) the number of
lines in $\mathcal{O}$. In particular, the characterization of the
area and perimeter functions on each variation, seems an interesting
and non-trivial problem.

As the Orthogonal Convex Hull, the $\bhull$ is suitable to be used as
a separator or an enclosing shape.~As it is always contained in the
standard convex hull (and therefore, in several other traditional
enclosing shapes), it is relevant in applications where the separator
or enclosing shape is required to have minimum area.~Finally, note
that we can easily extend the results from \Cref{sec:bhull} to
optimize the number of vertices of $\bhullp$, by keeping track of the
vertex count at each interval between events. Without much effort, the
approach and arguments from \citet{alegria_2013} can be extended to
$\bset$-convexity, and applied to problems related to containment
relations between $\bhull$s of colored point sets.

\bibliographystyle{plainnat}
\bibliography{references}

\begin{thebibliography}{18}
\providecommand{\natexlab}[1]{#1}
\providecommand{\url}[1]{\texttt{#1}}
\expandafter\ifx\csname urlstyle\endcsname\relax
  \providecommand{\doi}[1]{doi: #1}\else
  \providecommand{\doi}{doi: \begingroup \urlstyle{rm}\Url}\fi

\bibitem[Alegr\'{i}a-Galicia et~al.(2012)Alegr\'{i}a-Galicia, Gardu\~{n}o,
  Rosas-Navarrete, Seara, and Urrutia]{minimum-area_2012}
C.~Alegr\'{i}a-Galicia, T.~Gardu\~{n}o, A.~Rosas-Navarrete, C.~Seara, and
  J.~Urrutia.
\newblock {Rectilinear convex hull with minimum area}.
\newblock In \emph{Special LNCS Festschrift volume in honor of F. Hurtado's
  60th Birthday. Lecture Notes in Computer Science}, volume 7579, pages
  226--235. Springer-Verlag, 2012.

\bibitem[Alegr\'{i}a-Galicia et~al.(2013)Alegr\'{i}a-Galicia, Seara, and
  Urrutia]{alegria_2013}
C.~Alegr\'{i}a-Galicia, C.~Seara, and J.~Urrutia.
\newblock Computing containment relations between rectilinear convex hulls.
\newblock In \emph{Mexican Conference on Discrete Mathematics and Computational
  Geometry, 60th birthday of Jorge Urrutia}, November 2013.

\bibitem[Alegr\'{i}a-Galicia et~al.(2014)Alegr\'{i}a-Galicia, Orden, Seara, and
  Urrutia]{alegria_2014}
C.~Alegr\'{i}a-Galicia, D.~Orden, C.~Seara, and J.~Urrutia.
\newblock On the $\mathcal{O}$-hull of planar point sets.
\newblock In \emph{30th European Workshop on Computational Geometry (EuroCG)},
  March 2014.

\bibitem[Avis et~al.(1999)Avis, Beresford-Smith, Devroye, Elgindy,
  Gu\'{e}vremont, Hurtado, and Zhu]{theta-maxima_1999}
D.~Avis, B.~Beresford-Smith, L.~Devroye, H.~Elgindy, E.~Gu\'{e}vremont,
  F.~Hurtado, and B.~Zhu.
\newblock {Unoriented $\Theta$-maxima in the plane: complexity and algorithms}.
\newblock \emph{SIAM J. Comput.}, 28\penalty0 (1):\penalty0 278--296, February
  1999.
\newblock ISSN 0097-5397.

\bibitem[Bae et~al.(2009)Bae, Lee, Ahn, Choi, and Chwa]{bae_2009}
S.W. Bae, C.~Lee, H.K. Ahn, S.~Choi, and K.Y. Chwa.
\newblock Computing minimum-area rectilinear convex hull and l-shape.
\newblock \emph{Computational Geometry: Theory and Applications}, 42:\penalty0
  903--912, November 2009.
\newblock ISSN 0925-7721.

\bibitem[D\'{\i}az-B\'{a}\~{n}ez et~al.(2011)D\'{\i}az-B\'{a}\~{n}ez,
  L\'{o}pez, Mora, Seara, and Ventura]{fitting_2011}
J.~M. D\'{\i}az-B\'{a}\~{n}ez, M.~A. L\'{o}pez, M.~Mora, C.~Seara, and
  I.~Ventura.
\newblock {Fitting a two-joint orthogonal chain to a point set}.
\newblock \emph{Computational Geometry: Theory and Applications}, 44\penalty0
  (3):\penalty0 135--147, April 2011.
\newblock ISSN 0925-7721.

\bibitem[Fink and Wood(2004)]{fink_2004}
E.~Fink and D.~Wood.
\newblock \emph{Restricted-orientation convexity}.
\newblock Monographs in Theoretical Computer Science (An EATCS Series).
  Springer-Verlag, 2004.

\bibitem[Fran\v{e}k and Matou\v{s}ek(2009)]{franek_2009}
V.~Fran\v{e}k and J.~Matou\v{s}ek.
\newblock Computing $d$-convex hulls in the plane.
\newblock \emph{Computational Geometry: Theory and Applications}, 42\penalty0
  (1):\penalty0 81--89, January 2009.
\newblock ISSN 0925-7721.

\bibitem[G\"{u}ting(1983)]{guting_thesis_1983}
R.H. G\"{u}ting.
\newblock \emph{Conquering contours: efficient algorithms for computational
  geometry}.
\newblock PhD thesis, Fachbereich Informatik, Universit\"{a}t Dortmund, 1983.

\bibitem[Kung et~al.(1975)Kung, Luccio, and Preparata]{kung_1975}
H.~T. Kung, F.~Luccio, and F.P. Preparata.
\newblock On finding the maxima of a set of vectors.
\newblock \emph{Journal of the Association for Computer Machinery}, 22\penalty0
  (4):\penalty0 469--476, October 1975.

\bibitem[Ottmann et~al.(1984)Ottmann, Soisalon-Soininen, and
  Wood]{ottmann_1984}
T.~Ottmann, E.~Soisalon-Soininen, and D.~Wood.
\newblock On the definition and computation of rectilinear convex hulls.
\newblock \emph{Information Sciences}, 33\penalty0 (3):\penalty0 157--171,
  1984.

\bibitem[Pel\'{a}ez et~al.(2013)Pel\'{a}ez, Ram\'{i}rez-Vigueras, Seara, and
  Urrutia]{pelaez_2013}
C.~Pel\'{a}ez, A.~Ram\'{i}rez-Vigueras, C.~Seara, and J.~Urrutia.
\newblock {On the rectilinear convex layers of a planar set}.
\newblock In \emph{Mexican Conference on Discrete Mathematics and Computational
  Geometry, 60th birthday of Jorge Urrutia}, pages 195--202, November 11--15
  2013.

\bibitem[Preparata and Shamos(1985)]{preparata_1985}
F.P. Preparata and M.I. Shamos.
\newblock \emph{Computational Geometry: An Introduction}.
\newblock Springer--Verlag, 1985.

\bibitem[Rawlins(1987)]{rawlins_thesis_1987}
G.~J.~E. Rawlins.
\newblock \emph{Explorations in restricted-orientation geometry}.
\newblock PhD thesis, School of Computer Science, University of Waterloo, 1987.

\bibitem[Rawlins and Wood(1987)]{rawlins_1987}
G.~J.~E. Rawlins and D.~Wood.
\newblock Optimal computation of finitely oriented convex hulls.
\newblock \emph{Information and Computation}, 72:\penalty0 150--166, February
  1987.
\newblock ISSN 0890-5401.

\bibitem[Rawlins and Wood(1988)]{rawlins_1988}
G.~J.~E. Rawlins and D.~Wood.
\newblock Ortho-convexity and its generalizations.
\newblock \emph{Computational Morphology: A Computational Geometric Approach to
  the Analysis of Form}, pages 137--152, 1988.

\bibitem[Schuierer(1991)]{schuierer_thesis_1991}
S.~Schuierer.
\newblock \emph{On generalized visibility}.
\newblock PhD thesis, Institut f\"{u}r Informatik, Universit\"{a}t Freiburg,
  1991.

\bibitem[Widmayer et~al.(1987)Widmayer, Wu, and Wong]{widmayer_1987}
P.~Widmayer, Y.F. Wu, and C.K. Wong.
\newblock On some distance problems in fixed orientations.
\newblock \emph{SIAM Journal on Computing}, 16\penalty0 (4):\penalty0 728--746,
  1987.

\end{thebibliography}

\appendix
\renewcommand\thesection{Appendix \Alph{section}}

\section{The oriented $\left( 2,\beta \right)$-fitting problem}\label{sec:apps:fitting}


For $k \geq 1$, $\theta \in [0, \pi)$, and $\beta \in (0, \pi)$, a
\emph{$\cset$-polygonal chain with orientation $\theta$}, $\csetc$, is
a chain with $2k - 1$ consecutive alternating links with slopes
$\tan(\theta)$ and $\tan(\theta + \beta)$ such that the extreme links
are half-lines with orientation $\tan(\theta)$. Let us define
$\ell_{i,\beta}(\theta)$ as the line passing through $p_i \in P$ with
slope $\tan(\theta + \beta)$. The \emph{fitting distance} between
$p_i$ and $\csetc$ is given by

$$
d_f(p_i,\csetc) = \underset{p \in \ell_{i,\beta}(\theta) \cap
  \csetc}{\text{min}} d(p_i,p),
$$
where $d(p_i,p)$ represents the Euclidean distance between $p_i$ and
$p$. The \emph{error tolerance} of $\csetc$ with respect to $P$ is the
maximum fitting distance between $\csetc$ and the elements in $P$,
that is

$$
\mu(\csetc,P) = \underset{p_i \in P} \max \hspace{0.3cm} d_f(p_i,\csetc).
$$
The $\cset$-fitting problem for $P$ with the Min-Max criterion,
consists on finding a polygonal chain $\csetc$ with minimum error
tolerance $\mu(\csetc,P)$. See Figure~\ref{apps:fitting:fig:fitting}.

\begin{theorem}[\cite{fitting_2011}]
  \label{apps:fitting:thm:2-fitting}
  The $\cset[2][\frac{\pi}{2}]$-fitting problem can be solved in
  $\Theta(n \log n)$ time and $O(n)$ space.
\end{theorem}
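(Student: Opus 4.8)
The plan is to reduce the \cset[2][\frac{\pi}{2}]-fitting problem to a one-parameter family indexed by the orientation $\theta$, to solve it for a single orientation, and then to recover the global optimum by an angular sweep in $\theta$, in the spirit of \Cref{sec:bhull}. First I would rotate the plane so that the orientation of the chain $\csetc[2][\frac{\pi}{2}][\theta]$ becomes horizontal. In this frame its two extreme half-lines are horizontal, at heights $y_L$ and $y_R$, the middle link is a vertical segment (a \emph{riser}) at some abscissa $x_0$, and the fitting direction $\theta+\frac{\pi}{2}$ is vertical. Generically no two points share an abscissa, so the vertical line through a point $p_i$ meets the chain on the left ray when $x_i<x_0$ and on the right ray when $x_i>x_0$; hence $d_f(p_i,\cdot)$ is just the vertical distance $|y_i-y_L|$ or $|y_i-y_R|$. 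The chain therefore fits the points left of $x_0$ to height $y_L$ and those right of $x_0$ to height $y_R$ under the Min-Max criterion, which is exactly an $L_\infty$ two-level step fit.

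For a fixed orientation I would argue as follows. Once $x_0$ is chosen it splits $P$ into a left part $L$ and a right part $R$; the optimal heights are the mid-ranges of the vertical coordinates of $L$ and of $R$, and the induced error equals $\tfrac12\max\bigl(\operatorname{spread}(L),\operatorname{spread}(R)\bigr)$, where $\operatorname{spread}(S)=\max_{p\in S}y-\min_{p\in S}y$. Sorting $P$ by abscissa and precomputing prefix and suffix minima and maxima of the vertical coordinate lets me evaluate every candidate split in $O(1)$. As $x_0$ moves to the right, $\operatorname{spread}(L)$ is non-decreasing and $\operatorname{spread}(R)$ is non-increasing, so their maximum is minimized at the balancing split, located in one linear scan. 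Thus a single orientation costs $O(n\log n)$ for the sort and $O(n)$ thereafter.

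Then I would let $\theta$ sweep over $[0,\pi)$. Computing the convex hull of $P$ in $O(n\log n)$ time, I would use rotating calipers to maintain both the projection order of the points onto direction $\theta$ (which governs the split) and the antipodal hull pairs realizing $\operatorname{spread}(L)$ and $\operatorname{spread}(R)$ (which govern the error). With the extreme points fixed, each spread is a sinusoid in $\theta$, so the balancing error is a simple trigonometric expression with $O(1)$ critical points on each inter-event interval; as in \Cref{apps:perim:lemma:events}, the interval optimum is attained at an endpoint or at such a critical point, computable in $O(1)$. Each event is processed in $O(\log n)$, the global optimum is the best value seen over all events, and a matching $\Omega(n\log n)$ lower bound (by embedding a sorting-type instance in the algebraic decision tree model) yields the claimed $\Theta(n\log n)$ time and $O(n)$ space.

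The hard part will be bounding the number of sweep events by $O(n)$. A naive estimate is far too weak: the projection order of $n$ points can change $\Theta(n^2)$ times as $\theta$ runs through $\pi$, and each swap could conceivably move the balancing riser and reassign points between $L$ and $R$. The crux is therefore a structural lemma asserting that the optimal configuration changes only $O(n)$ times, controlled by the $O(n)$ caliper (antipodal) events of the hull together with the $O(n)$ angles at which a hull vertex crosses the balancing riser, while swaps among interior points never affect the optimum. Establishing this, and showing that the associated prefix/suffix spread data can be updated rather than rebuilt at each such event, is where the real work lies; the remaining bookkeeping follows the angular-sweep machinery of \Cref{sec:bhull} verbatim.
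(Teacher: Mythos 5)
You should first note a framing point: the paper itself does not prove this statement --- it is imported from \cite{fitting_2011}, and the appendix only sketches the cited algorithm: maintain the orthogonal convex hull $\bhullp[P][\frac{\pi}{2}]$ while $\theta$ rotates, over a \emph{linear} number of orientation events, and use two characterization lemmas ((i) the optimal bipartition line passes through a point of $P$; (ii) the optimum on an interval occurs at an endpoint or where the left and right error tolerances balance). Your fixed-$\theta$ analysis is correct and essentially re-derives that characterization: the reduction to an $L_\infty$ two-level step fit, the mid-range heights, the monotonicity of $\operatorname{spread}(L)$ and $\operatorname{spread}(R)$ in the split, and the balancing split are all sound, and give $O(n\log n)$ per orientation.

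The genuine gap is exactly where you placed it, and your proposed repair uses the wrong structure. You hope that convex-hull caliper (antipodal) events, plus angles where the riser crosses a hull vertex, control the optimum, and you assert that ``swaps among interior points never affect the optimum.'' That is false: the point realizing $\operatorname{spread}(L)$ or $\operatorname{spread}(R)$ need not be a convex-hull vertex. For instance, with $P=\{(0,0),(0,10),(20,0),(6,6)\}$ and the riser at $x_0=3$, the point $(6,6)$ lies strictly inside the convex hull yet realizes $\max y$ over the right side, so the error tolerance depends on a point invisible to rotating calipers, and no lemma restricted to hull events can bound the relevant changes. What \emph{is} true --- and what the cited proof exploits --- is that prefix/suffix extrema in the fitting direction are always vertices of the four $\bset[\frac{\pi}{2}]$-staircases (dominance maxima): the $y$-maximum over a prefix is a top-left staircase point, over a suffix a top-right staircase point (in the example, $(6,6)$ is a top-right staircase vertex), and symmetrically for minima. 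Hence the correct event set is the $O(n)$ set of combinatorial changes (staircase insertions/deletions) of the rotating orthogonal hull, i.e., the machinery of \Cref{sec:bhull} transposed to a $\theta$-sweep, not convex-hull antipodal events. Substituting the rotating $\bhullp[P][\frac{\pi}{2}]$ for your caliper structure simultaneously supplies the missing $O(n)$ event bound and makes the $O(\log n)$ per-event maintenance of the spread data routine; as written, the structural lemma you defer is not merely unproven but unprovable.
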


We consider here the case where $\theta$ has a constant value, namely
$0$, and we want to find the chain
$\csetc[2][][0] = \mathcal{C}_{2,\beta}$ of optimal error
tolerance.~More formally, we solve the following problem.

\begin{problem}[Oriented $\left( 2,\beta \right)$-fitting]
  Given a set $P$ of $n$ points in the plane, compute a polygonal
  chain $\mathcal{C}_{2,\beta}$ such that
  $\mu(\mathcal{C}_{2,\beta},P)$ has minimum value.
\end{problem}

Consider the algorithm used in~\cite{fitting_2011} to obtain the
$O(n \log n)$ time bound for the $\cset[2][\frac{\pi}{2}]$-fitting
problem used to prove \Cref{apps:fitting:thm:2-fitting}.~The
$\bhull[\frac{\pi}{2}]$ of $P$ is used as a tool to solve the problem
in $O(\log n)$ time for a fixed value of $\theta$ in a closed
orientation interval $[\theta_i,\theta_{i+1}]$.~An event sequence of a
linear number of orientation intervals is created to maintain
$\bhullp[P][\frac{\pi}{2}]$ as $\theta$ grows from $0$ to $2\pi$.

To solve Problem~3 we can follow exactly the same techniques.~We refer
the reader to reference~\cite{fitting_2011} just to see the evident
changes coming from the use of a different structure.~More concretely,
the structure $\bhullp[P][\frac{\pi}{2}]$ is replaced by $\bhullp$
which needs also a linear number of \emph{interval events}
$[\beta_i,\beta_{i+1}]$ to be maintained, and where the angular sweep
is performed over~$\beta$.~Thus, Lemmas 3 and 4
in~\cite{fitting_2011} can be now stated as follows:\\

(i) \emph{Given a value $\beta\in [\beta_i,\beta_{i+1}]$, an optimal
  solution of the $\cset[2]$-fitting problem for $\beta$ is defined by
  a line $\ell_{i,\beta}$ with slope
  $\tan(\beta)$ passing through a point $p_i$ of $P$ which gives the bipartition of $P$}.\\

(ii) \emph{The optimal solution of the $\cset[2]$-fitting problem for
  an interval event $[\beta_i,\beta_{i+1}]$ occurs either at an
  endpoint of the interval, i.e., at $\beta_i$ or $\beta_{i+1}$, or at
  a value $\beta_0 \in [\beta_i,\beta_{i+1}]$
  when the left and right error tolerance are equal}.\\

Using the properties (i),(ii) and following the maintenance of
$\bhullp$, the problem is solved as follows:
\begin{enumerate}
\item Compute $\bhullp$ and the optimal error tolerance for the first
  interval between events.
\item Traverse the event sequence, obtaining the optimal error
  tolerance at each interval between events.
\item Update the previous solution only when it is improved.
\end{enumerate}

Thus, the approach and arguments used in
Theorem~\ref{apps:fitting:thm:2-fitting} hold in the case of the
$\cset[2]$-fitting problem See
Figure~\ref{apps:fitting:fig:fitting}.~As a consequence, we get the
following theorem.

\begin{theorem}
  \label{apps:fitting:thm:fitting}
  The $\cset[2]$-fitting problem can be solved in $O(n\log n)$
  time and $O(n)$ space.
\end{theorem}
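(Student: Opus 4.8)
The plan is to reduce the oriented $\cset[2]$-fitting problem to the machinery already developed for maintaining $\bhullp$ under an angular sweep, exactly mirroring the reduction that~\cite{fitting_2011} carried out for the orthogonal case $\beta = \frac{\pi}{2}$. The guiding observation is that, because $\theta$ is fixed at $0$, the only free parameter is $\beta$, and so the sweep over $\beta$ from $0$ to $\pi$ produces the same discrete sequence of $O(n)$ interval events $[\beta_i,\beta_{i+1}]$ that drives the maintenance of $\bhullp$. I would first state precisely the two structural facts (i) and (ii) adapted from Lemmas~3 and~4 of~\cite{fitting_2011}: that within a single interval event the optimal chain is determined by a single line $\ell_{i,\beta}$ of slope $\tan(\beta)$ through some $p_i \in P$ inducing a bipartition, and that the optimum over a closed interval is attained either at an endpoint or at the unique interior value $\beta_0$ where the left and right error tolerances coincide. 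These are the combinatorial invariants that make a logarithmic per-interval update possible.

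With (i) and (ii) in hand, the algorithm is the three-step sweep stated in the excerpt. First I would compute $\bhullp$ for the first interval, which by \Cref{intro:thm:fixed_computation} costs $\Theta(n\log n)$ time and $O(n)$ space, and solve the fitting subproblem on that interval. Then I would traverse the precomputed event sequence, and at each interval event update the combinatorial structure of $\bhullp$ in $O(\log n)$ time using the data structures of \Cref{sec:bhull:sweep}, recompute the candidate optimum for the new interval via property~(ii) in $O(\log n)$ time, and retain the running optimum only when improved. Since \Cref{bhull:lemma:linear_events} guarantees $O(n)$ events and each is handled in logarithmic time, the traversal contributes $O(n\log n)$ overall, matching the bound claimed in \Cref{apps:fitting:thm:fitting}.

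The main obstacle, and the place where genuine verification is required rather than a verbatim transfer from~\cite{fitting_2011}, is establishing property~(i) and the endpoint/equal-tolerance characterization~(ii) in the oriented setting. In the orthogonal case the two chain links are axis-parallel, so the bipartition and the left/right tolerance functions have a particularly clean monotone form; here the links have slopes $\tan(0)$ and $\tan(\beta)$, so the fitting distance $d_f(p_i,\csetc)$ along $\ell_{i,\beta}(\theta)$ acquires a dependence on $\beta$ through the geometry of the skewed chain. I would verify that, as $\beta$ ranges over a single interval, the supporting point $p_i$ realizing the bipartition remains a vertex of $\bhullp$ (so that it is tracked by the staircase structure) and that the left and right error-tolerance functions are each monotone in $\beta$ within the interval, which yields the unique crossing value $\beta_0$ of~(ii). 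Once this monotonicity is confirmed, the per-interval optimum reduces to comparing the two endpoint values against the single interior candidate, and the remainder of the argument is the routine bookkeeping already justified for the area and perimeter problems.
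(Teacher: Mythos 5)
Your proposal follows essentially the same route as the paper: it too reduces the oriented $\cset[2]$-fitting problem to the $\beta$-sweep maintenance of $\bhullp$, restates Lemmas~3 and~4 of~\cite{fitting_2011} as the properties (i) and (ii), and runs the same three-step traversal with $O(\log n)$ work per interval over the $O(n)$ events, yielding $O(n\log n)$ time and $O(n)$ space. If anything, you are more explicit than the paper at the one delicate point---verifying (i) and (ii) for the skewed links, including the monotonicity of the left and right error-tolerance functions in $\beta$---which the paper dispatches as ``evident changes'' with a pointer to~\cite{fitting_2011}.
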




\begin{figure}[H]
  \centering
  \begin{minipage}{0.9\textwidth}
    \centering
    {\includegraphics[scale=0.9]{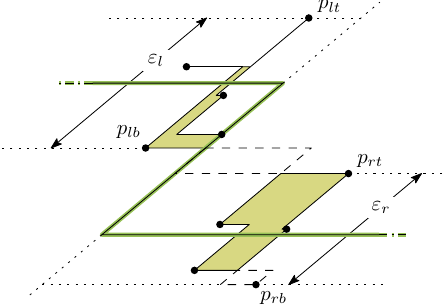}}
    \caption{The polygonal chain $\mathcal{C}_{2,\beta}$ and the
      $\bhull$ of $P$.}
    \label{apps:fitting:fig:fitting}
  \end{minipage}
\end{figure}

\end{document}